\newcommand{\footremember}[2]{%
    \footnote{#2}
    \newcounter{#1}
    \setcounter{#1}{\value{footnote}}%
}
\newcommand{\footrecall}[1]{%
    \footnotemark[\value{#1}]%
}
\renewcommand{\backref}[1]{}
\renewcommand{\backrefalt}[4]{%
\ifcase #1 %
\or 
[p.\ #2]%
\else 
[pp.\ #2]%
\fi}
\newcommand{\ketbra}[1]{| #1 \rangle \! \langle #1 |} 
\newtheorem{theorem}{Theorem}
\newtheorem{lemma}[theorem]{Lemma}
\DeclareMathOperator{\tr}{tr}
\DeclareMathOperator{\im}{Im}
\DeclareMathOperator{\poly}{poly}
\renewcommand{\varepsilon}{\epsilon}
\newcommand{\ket}[1]{|#1\rangle}
\newcommand{\bra}[1]{\langle#1|}
\newcommand{\braket}[2]{\langle#1|#2\rangle}
\def\mg{{\mathfrak g}}
\def\rD{{\rm d}}
\def\cO{\mathcal{O}}
\def\cS{\mathcal{S}}
\def\cV{\mathcal{V}}
\def\Pr{\mathrm{Pr}}
\def\poly{\mathrm{poly}}
\def\mg{{\mathfrak g}}
\def\mg{{\mathfrak g}}
\def\one{{\mathchoice {\rm 1\mskip-4mu l} {\rm 1\mskip-4mu l} {\rm
1\mskip-4.5mu l} {\rm 1\mskip-5mu l}}}
\newcommand{\id}{\one}
\def\dqc1{\textsc{DQC1}}
\newcommand{\psibar}{\smash{\overline{\psi}}}
\newcommand{\sket}[1]{| #1 \rangle}   
\newcommand{\polylog}{\mathrm{polylog}}
\newcommand{\C}{\mathbb{C}}
\newcommand{\barpsi}{\overline{\psi}}
\begin{document}
\title{\bfseries Shadow Hamiltonian Simulation}

\author{\hspace{-9pt}
Rolando D. Somma\footremember{google}{Google Quantum AI, Venice, CA 90291, USA}
~~
Robbie King\footrecall{google} \footremember{caltech}{Computational and Mathematical Sciences, Caltech, Pasadena, CA 91125, USA}
~~
Robin Kothari\footrecall{google}
~~
Thomas O'Brien\footrecall{google}
~~
Ryan Babbush\footrecall{google}
}

\date{}

\maketitle

\begin{abstract}
Simulating quantum dynamics is one of the most important applications of quantum computers. 
Traditional approaches for quantum simulation involve preparing the full evolved state of the system and then measuring some 
physical quantity. Here, we present a different and novel approach to quantum simulation that  uses a compressed quantum state that we call the ``shadow state''. The amplitudes of this shadow state are proportional to the time-dependent expectations of a specific set of operators of interest, and  it evolves according to its own Schr\"odinger equation. This evolution 
can be simulated on a quantum computer efficiently under broad conditions. Applications of this approach to quantum simulation problems include simulating the dynamics of exponentially large systems of free fermions or free bosons, 
the latter example recovering a recent algorithm for simulating exponentially many classical harmonic oscillators. These simulations are hard for classical methods and also for traditional quantum approaches, as preparing the full states would require exponential resources. 
Shadow Hamiltonian simulation can also be extended to simulate expectations of more complex operators such as two-time correlators or Green's functions, and to study the evolution of operators themselves in the Heisenberg picture.
\end{abstract}

\section{Introduction}

Simulating the dynamics of quantum systems, often referred to as quantum simulation or Hamiltonian simulation, is a problem
where quantum computers offer an exponential advantage (cf.~\cite{Llo96,SOGKL02,BAC07,WBH+10,CW12,BCC+15,LC17,LC19,campbell2019random,zlokapa2024hamiltonian}). 
The standard presentation of this problem involves an initial quantum state $\ket{\psi(0)}$ that is easy to prepare on a quantum computer and a Hamiltonian $H$ that describes the interactions in the system. The goal is to produce the state at a later time $t>0$, $\ket{\psi(t)}$, which satisfies the Schr\"odinger equation: $\frac {\rm d} {{\rm d} t}\ket{\psi(t)}=-\mathrm{i}H\ket{\psi(t)}$. In applications, once   $\ket{\psi(t)}$ has been prepared, one is often interested in obtaining some physical property
like the ground state energy of the system or some correlation function. 
Many of these properties can be extracted from a limited set of measurements and expectations, such as those of $k$-local operators (e.g., two-body correlations).

While access to the full state $\ket{\psi(t)}$ allows us to obtain the  expectation of {\em any} observable on this state, for certain problems, it might suffice to prepare a different quantum state that does not encode all the information encoded by $\ket{\psi(t)}$, but instead encodes information about specific properties only.
This could result in significant computational savings, motivating us to
introduce a novel framework for quantum simulation 
that we call ``shadow Hamiltonian simulation''. Rather than storing the full state in quantum memory during the quantum computation, we only encode certain expectations in the amplitudes of a quantum state of smaller dimension, which we refer to as the 
``shadow state''. We will see that the shadow state evolves unitarily under its own Schr\"odinger equation and, under broad conditions, this evolution can be simulated efficiently on a quantum computer, unveiling numerous applications.

To illustrate the power of shadow Hamiltonian simulation, we apply it to simulate the dynamics of free-fermion and free-boson systems, qubit systems, and other systems with a Lie algebraic structure. In the first two cases, by encoding only the expectations of products of fermionic or bosonic operators of constant degree, 
we can use the exponential dimension of 
the shadow state to simulate systems of exponential size. 
These instances can be relevant in, for example, first-quantization simulations, 
where increasing the number of modes, corresponding to the number of points in a discretized grid, allow for better approximations of the states~\cite{gruneis2013explicitlybb,Su2021bb}. 
We can then use shadow states to compute, for example, energies of quantum states associated with fermionic or bosonic modes at any time. This includes expectations of Hamiltonians that are interacting
and other interesting quantities that would require exponential resources using standard Hamiltonian simulation or classical methods. For example,
with standard Hamiltonian simulation, the number of registers needed is linear in either the number of particles or the number of modes, while shadow Hamiltonian simulation does not suffer from this exponential growth in space.
Moreover, using a classical--quantum correspondence (Ehrenfest theorem) we are able to reproduce the results of Ref.~\cite{babbush2023exponential} on simulating dynamics of exponentially many classical oscillators (a BQP-complete problem), which can now be seen as an example of shadow Hamiltonian simulation. These are only a few example applications but the approach can be applied more broadly.

The principles behind shadow Hamiltonian simulation can be generalized in a way that allows encoding the expectations of more complex operators, such as two-time correlators, in the amplitudes of a different quantum state. 
This is convenient for exploring other dynamical properties of the system, yet without requiring the preparation of the full system's state. 
Additionally, while shadow Hamiltonian simulation is  designed to track the time-dependent {\em expectations} of certain operators, it can be easily extended to track the evolution of these operators determined by Heisenberg's equations of motion, without making any reference to a system's state. This enables other efficient quantum computations, such as learning unitary operations, or finding signatures of operator spreading (e.g., light cones) or scrambling~\cite{mi2021information}, for which classical algorithms can take exponential time.

Shadow Hamiltonian simulation relaxes the requirements of standard Hamiltonian simulation in the same way that shadow state tomography relaxes the requirements of full quantum state tomography~\cite{aaronson2018shadow}. 
That is, in the traditional state tomography problem, we are given copies of an unknown state and need to learn a {\em full} classical description of the state, from which we can obtain the expectation of any observable. 
This task requires exponentially many (in the system size) copies of the unknown state in general. In contrast, in shadow tomography, the goal is to learn expectations of a {\em limited} set of observables  on the state, a task that requires much fewer samples than traditional state tomography. 
However, this is only a loose analogy, and we stress that the two settings are distinct: while shadow tomography concerns the problem of learning some classical description of a quantum state, shadow Hamiltonian simulation is related to the problem of evolving expectations in time.

\section{Results}
\label{sec:results}

\subsection*{Formalism}
\label{sec:formalism}

{\bf Shadow states.}
Our results use the notion of a ``shadow state'', which we define formally.
Let $S=\{O_1,\dots,O_M\}$ be a finite set of   operators and suppose we are only interested in  the expectations $\langle O_m \rangle$  with respect to a (potentially mixed) quantum state $\rho$. We would like to store a compressed version of $\rho$ that does not necessarily allow us to reconstruct it, but does allow us to recover certain quantities that depend on the $\langle O_m \rangle$'s only.
We call this the shadow state of $\rho$ with respect to the set $S$, and define it as follows:
    \begin{equation}
    \label{eq:shadowstate}
        \ket{\rho; S} := \frac{1}{\sqrt{A}} \begin{pmatrix}
            \langle O_1 \rangle\cr \vdots \cr  \langle O_M \rangle
        \end{pmatrix}
        = \frac{1}{\sqrt{A}} \sum_{m=1}^M \langle O_m \rangle \ket{m},
    \end{equation}
    where $A = \sum_m \langle O_m\rangle^2$ is for normalization.

For a given set $S$, the same shadow state can correspond to many physical states $\rho$, which occurs when $S$ is not a complete set of operators. 
In this case, $\ket{\rho; S}$ contains less information than $\rho$, in the sense that a classical description of $\rho$ cannot be reconstructed from a classical description of $\ket{\rho; S}$ (whereas $\ket{\rho; S}$ is fully specified by $\rho$ and $S$). 
While $\ket{\rho; S}$ is a state of finite dimension $M$, we do not make any assumptions on the dimension of $\rho$. 
This allows us to consider infinite-dimensional systems as well, such as bosonic systems.

Note that shadow states depend on $\rho$ and $S$ only, and do not make reference to any interactions or Hamiltonian.
In~Supp. Inf., we discuss many properties of shadow states. In particular, in Supp. Note 3 we show that if $S$ is an orthogonal set of operators acting on a finite-dimensional system and $\rho=\ketbra \psi$ is a pure state, then $\ket{\rho;S}$ is a projection or ``shadow'' of $\ket \psi \otimes \overline{\ket \psi}$, where $\overline{\ket \psi}$ is the state whose amplitudes are complex conjugates of the amplitudes of $\ket{\psi}$.

One common task that can be solved with shadow states
is the estimation of an arbitrary linear combination of the expectations $\langle O_m \rangle$, which reduces to an overlap estimation problem. Another common task is that of sampling from a distribution where probabilities are $\Pr(m) \propto |\langle O_m \rangle|^2$, which can be done by
measuring the shadow state in the computational basis.
These tasks are related to other problems in quantum simulation, like estimating energies of a subset of  modes in fermionic and bosonic systems of exponential size, which are studied in this article.

{\bf Shadow Hamiltonian simulation.}
In standard Hamiltonian simulation, 
the full initial state $\rho(0)$ is mapped to the final state $\rho(t)$ by applying the evolution operator $U(t)$. This is determined by the Hamiltonian $H$. Instead, in shadow Hamiltonian simulation, we are additionally given a set of operators $S$, and the goal is to map the initial shadow state $\ket{\rho(0);S}$ to the final shadow state $\ket{\rho(t);S}$. The challenge is to construct this mapping, which is different from $U(t)$.

In general, the vector $\ket{\rho(0);S}$ might not contain enough information to completely determine the vector $\ket{\rho(t);S}$ at a later time. 
For it to be possible to obtain $\ket{\rho(t);S}$ from $\ket{\rho(0);S}$, the expectations of the operators in $S$ at time $t$ should be a function of the initial expectations. This occurs when the Hamiltonian $H$ leaves the space of operators spanned by $S$ invariant, in the sense that the commutation relations between $H$ and the operators in $S$ yield a linear combination of operators in $S$ only.
In this article, we make strong use of this ``invariance property'': We say that $H$ and $S$ satisfy the  invariance property (IP) if and only if
   \begin{equation}
   \label{eq:invarianceproperty}
       [H,O_m]= -\sum_{m'=1}^M h_{mm'} O_{m'} \; , \quad \quad \forall m \in [M] \;,
   \end{equation}
   where $h_{mm'}$ are coefficients.
Let ${H}_S$ be the $M \times M$ matrix of entries $h_{mm'}$. Then, we can write \cref{eq:invarianceproperty} as
\begin{equation} \label{eq:invarianceproperty_matrix}
    [H,{\bf{O}}]= -{H}_S \bf{O} \;,
\end{equation}
where ${\bf{O}} := (O_1,\dots,O_M)^T$ is a vector of operators.

The IP ensures that the shadow Hamiltonian simulation problem is solvable, but does not provide the desired mapping. Our first result is that, if the matrix ${H}_S$ is Hermitian, then the shadow state  evolves according to a Schr\"odinger equation, but with the Hamiltonian ${H}_S$ rather than $H$. 

\begin{theorem} \label{thm:main}
    The shadow state $\ket{\rho(t);S}$ satisfies
    \begin{equation} \label{eq:shadow_schrodinger}
        \frac {\rm d} {{\rm d} t}\ket{\rho(t);S}=-\mathrm{i}{H}_S\ket{\rho(t);S} \; .
    \end{equation}
    Furthermore, if ${H}_S$ is Hermitian,  \cref{eq:shadow_schrodinger} is a Schr\"odinger equation.
\end{theorem}

The proof is a direct consequence of Schr\"odinger's equation applied to $\rho(t)$ and given in~Supp. Note 1. For finite-dimensional systems, ${H}_S$ is Hermitian
 when the operators in $S$ are orthogonal, i.e., 
$\tr(O_m^\dagger O_{m'})=\lambda \delta_{mm'}$, $\lambda >0$; see~Supp. Note 1.
Hence, the result reduces 
the shadow Hamiltonian simulation problem to standard Hamiltonian simulation, which can be performed efficiently on a quantum computer
 under broad conditions on ${H}_S$.

Note that
\cref{thm:main}
applies even if the operators in $S$ are not orthogonal or linearly independent, or even if $\rho(t)$ is infinite-dimensional. Additionally, while we assumed $S$ to be time-independent, the results can be generalized to time-dependent operator sets $S(t)$. For this case, the evolution of $\ket{\rho(t);S(t)}$
is not only induced by ${H}_S$ but also by the operator due to terms of the form $\frac{\partial}{\partial t}O_m(t)$, as they will appear in the time-derivatives of the time-evolved operators under $H$. Nevertheless, in this article we focus on time-independent $S$ for simplicity.

While the IP alone does not imply that ${H}_S$ is Hermitian in general, 
it is sometimes possible to take a particular linear combination of the $O_m$'s to fix this; that is, we can use the property $[H,{ A}{\bf O}]=-{ A} {H}_S{\bf O}$ for an $M \times M$ matrix ${ A}$ and take specific linear combinations of the $O_m$'s to obtain a Hermitian matrix.
This provides some flexibility for shadow Hamiltonian simulation, and we will use this observation for simulating bosonic systems. 
Additionally, since quantum circuits can be interpreted as time evolution with a time-dependent Hamiltonian, shadow Hamiltonian simulation can be easily extended to simulate quantum circuits as well; see~Supp. Note 1.
Note that, even if ${H}_S$ is not Hermitian,
it might be possible
to apply other quantum algorithms for simulating the non-unitary evolution arising from 
\cref{eq:shadow_schrodinger} (cf.~\cite{Ber14,childs2021high}), but
these studies are outside the scope of this article.

\subsection*{Applications}
\label{sec:applications}

 We apply shadow Hamiltonian simulation to various quantum systems, and present problems that can be solved efficiently using this framework.
Subsequently, we demonstrate how similar ideas can be applied to related problems: encoding two-time correlators or expectations of products of operators in a quantum state, 
and encoding a time-dependent operator, subject to Heisenberg's equations of motion, in a quantum state.

{\bf Free-fermion systems.} 
\label{sec:free_fermions}
These systems appear ubiquitously in quantum chemistry and physics. Examples include BCS superconductivity~\cite{takahashi2005thermodynamics}, integer quantum Hall effect~\cite{tong2016quantum}, spin models like the transverse-field Ising model~\cite{barouch1971statistical},
and other cases described by mean-field theories. 
For $n$ fermionic modes, the number of degree-$k$ fermionic operators is $\cO(n^k)$, and in this case we could use shadow states to encode the expectations of operators acting on systems that are exponentially large, e.g., $n=2^r$. 
Our goal is to provide a method for shadow Hamiltonian simulation that is efficient,   of complexity polynomial in $r$ or $\polylog(n)$, and polynomial in the evolution time $t$. We will show this is possible for  free-fermion systems under broad conditions.

In terms of Majorana operators,
the Hamiltonian of a free-fermion system is 
$H = \sum_{j,k=1}^{2n} \gamma_{jk} c_j c_k$,
where the interaction strengths $\gamma_{jk}\in \mathbb{C}$
define a $2n \times 2n$ Hermitian matrix ${ \Gamma}$.
Majorana operators are Hermitian and satisfy 
the commutation relations $[c_jc_k,c_l]=2\delta_{lk} c_j - 2\delta_{lj} c_k$. 
These imply that $H$ and the set $S$ of degree-$k$
fermionic operators satisfy the IP for any $1 \le k \le 2n$.
Consider, for example, $S=\{c_jc_k\}_{1 \le j < k \le 2n}$, and note that the $c_jc_k$'s are readily orthogonal.
The amplitudes of the shadow state $\ket{\rho(t);S}$ are proportional to the expectations $\langle c_j c_k(t) \rangle=\tr(\rho(t) \,c_j c_k)$, which in this case coincide with the entries
of the so-called 1-RDM (reduced density matrix) of the evolved system's state $\rho(t)$~\cite{Rubin2018}.
To prepare the shadow state at time $t$, we would like to simulate the Hamiltonian ${H}_S$ in \cref{eq:shadow_schrodinger} efficiently
and also prepare the initial shadow state efficiently.
For these to occur, it suffices to have efficient access to the sparse matrix ${ \Gamma}$ and for $\rho(0)$ to be the free vacuum
(the state with no bare or free fermions) or a fermionic Gaussian state (Slater determinant).
We give more details in~\Cref{sec:methods}
and Supp. Note 2, which
describes the case of free fermions in depth.

Many quantities can be extracted from the shadow state of an exponentially large fermionic system, like
the energy $\langle H_J \rangle$ of a subset $J \subseteq [2n]$
of Majorana modes, where $H_J:=\sum_{(j,k) \in J} \gamma_{jk} c_j c_k$. This is an overlap estimation problem, since
   $\langle H_J \rangle = G \sqrt A \langle \psi_J \ket{\rho ; S}$, 
where $\ket{\psi_J}:=\frac 1 G \sum_{(j,k) \in J} \gamma_{jk} \ket{j,k}$  is a unit quantum state,
$\ket{\rho;S}=\sum_{j,k} \langle c_jc_k \rangle \ket {j,k} / \sqrt{A}$, 
and $G =[\sum_{(j,k)\in J}\gamma_{jk}^2]^{1/2}$ and $A$ are for normalization.  It can be shown that $A \le n$.
When the set $J$ is extensive so that $|J| \propto n$ and ${ \Gamma}$ is sparse,  $G=\cO(\sqrt n \|{ \Gamma}\|_{\rm max})$. 
Then, $\langle H_J \rangle/(n \|{ \Gamma}\|_{\rm max})$ 
is an intensive property that can be computed efficiently
from the overlap $\langle \psi_J \ket{\rho ; S}$ 
using known methods (cf.~\cite{KOS07}).
These problems can also be shown to be BQP-complete.
More generally, we can consider shadow states that encode 
the expectations of degree-$k$ fermionic operators, allowing us to estimate energies of interacting fermionic systems (non-quadratic Hamiltonians) in fermionic Gaussian states and other states.

{\bf Free-boson systems.}
These systems are also prevalent in physics, and describe physical
phenomena such as  superfluidity~\cite{bogoliubov1947theory} and quantum optics~\cite{scully1997quantum}. 
Free-boson systems can also be understood as a collection 
of coupled quantum harmonic oscillators, which are quantum particles evolving under the influence of quadratic potentials. Like in the prior example, we are interested in simulating exponentially large systems efficiently, in time $\polylog(n)$ and $\poly(t)$.

We consider a presentation of a free-boson system given by a quadratic Hamiltonian $H=\frac 1 2 {\bf Y}^T { \Gamma} {\bf Y}$, where ${\bf { Y}}:=(P_1,P_2,\ldots,Q_1,Q_2,\ldots)^T$ is a vector of $2n$ Hermitian operators and $ \Gamma$ is a $2n \times 2n$ Hermitian matrix of interaction strengths (assumed to be real). The operators $P_j$ and $Q_j$ are associated with generalized coordinates like  momentum and position,
and satisfy the cannonical commutation relations
$[Y_j,Y_k]=\mathrm{i} \Omega_{jk}$, for $j,k \in [2n]$. Here,
  $\Omega_{jk}$ is the entry of the symplectic matrix ${ \Omega}=\begin{pmatrix} {\bf 0}& -\one_n \cr \one_n & {\bf 0}\end{pmatrix}$, where $\one_n$ is the $n \times n$ identity. 
 This implies
   $[Y_j Y_k,Y_l]=\mathrm{i} \Omega_{kl} Y_j  + \mathrm{i} \Omega_{jl }Y_k$.
Then, the set of degree-$k$  bosonic operators also satisfies the IP for any $k\ge 1$.

As a special case we could choose $S=\{P_1,P_2,\ldots,Q_1,Q_2,\ldots\}$. However, one immediately arrives at the issue that ${H}_S$ is not Hermitian in general, unless it is number conserving, a case recently addressed in Ref.~\cite{barthe2024gate} within a related context.
To circumvent this issue, we use a decomposition ${ \Gamma}={ \Gamma}_1{ \Gamma}_2$ for some ${ \Gamma}_1$ of dimension $2n \times M$ and ${ \Gamma}_2$ of dimension $M \times 2n$, where $M \ge \mathrm{rank}( \Gamma)$. 
Transforming
${\bf {Y}} \mapsto   {\bf {O}}:={ \Gamma}_2 \bf {Y}$
changes the coefficients in the commutators such that the IP reads
$[H,{\bf {O}}]=-\mathrm{i} { \Gamma}_2 { \Omega} { \Gamma}_1 \bf{O}$.
The set of operators is now $S:=\{O_1,\ldots,O_M\}$; this also satisfies the IP
and we are interested in instances where $\mathrm{i} { \Gamma}_2 { \Omega} { \Gamma}_1 \equiv {H}_S$ is Hermitian. This occurs for Hamiltonians 
of exponentially many coupled quantum harmonic oscillators, e.g.,
\begin{align}
\label{eq:quantumharmonicH}
    H=  \sum_j \frac{(P_j)^2}{2m_j}+ \frac {\kappa_{jj}} 2  (Q_j)^2 +\sum_{k>j} \frac{\kappa_{jk}} 2 (Q_j-Q_k)^2 \;.
\end{align}
Here, $m_j>0$ and $\kappa_{jk} \ge 0$ for all $j \in [n]$, $k \in [n]$, are the masses and spring constants.
For this $H$, the matrix of interaction strengths satisfies ${ \Gamma} \succeq 0$. 
Considering a factorization ${ \Gamma}={ B}{ B}^\dagger$, for some matrix ${ B}$, we can choose ${ \Gamma}_1={ B}$ and ${ \Gamma}_2={ B}^\dagger$, and prove that ${H}_S$ is now Hermitian.

The Hamiltonian ${H}_S$ can be simulated efficiently when we have efficient access to the sparse matrix ${ B}$.
This is discussed
in Ref.~\cite{babbush2023exponential} in great detail, which gives a quantum algorithm for simulating {\em classical} oscillators; both problems are equivalent due to a classical-quantum correspondence.  Also, the initial shadow state can be prepared efficiently in some cases. 
For illustration, consider the case of a bosonic product state $\rho$ where
$\langle Q_j(0) \rangle=0$ and $\frac 1 {\sqrt{ m_j}}\langle P_j (0)\rangle \ne 0$ is the same constant for all $j \in [n]$. 
Then, $\ket{\rho(0);S}=\frac 1 {\sqrt n}\sum_{j=1}^n \ket j \ket j$
is  a simple superposition state that can be prepared with $\cO(\log(n))$ elementary gates. Like in the case of free-fermion systems, other shadow states
can be considered by building upon this example.
Hence, shadow Hamiltonian simulation enables the efficient simulation of the dynamics of exponentially large bosonic systems in many cases.

Shadow states can be used to compute properties
of these systems efficiently. One example is estimating
a semiclassical approximation to the (rescaled) kinetic or potential energies
of a subset of masses or springs.
This is essentially the same problem discussed in Ref.~\cite{babbush2023exponential}, which is BQP-complete, and the reason why this is a semiclassical approximation here is because the expectation of a quadratic bosonic operator is not equal to the product of expectations of single operators in general. 
To go beyond this approximation, we could consider the set $S=\{O_m O_{m'}\}_{1 \le m \le M,1\le m' \le M}$, which also satisfies the IP. With this choice we have
    $\ket{\rho;S} =\frac 1 {\sqrt A} \sum_{m,m'} \langle O_m O_{m'}\rangle \ket{m,m'}$, where $A>0$ is for normalization.
It can be shown that the shadow state at time $t$ can
be obtained from the initial one by evolving with
 ${H}_S \otimes \one_M + \one_M \otimes {H}_S$ for time $t$.
 (\Cref{thm:main2} provides a more general result.) 
The energy of a subset  $I \subseteq [M]$ of quantum oscillators
is $\langle H_I\rangle=\frac 1 2 \sum_{m \in I}\langle (O_m)^2 \rangle$. 
This is also an overlap estimation problem, since
$\langle H_I\rangle =\frac 1 2\sqrt {A |I|} \langle \psi_I \ket{\rho;S}$,
where $\ket{\psi_I}=\frac 1 {\sqrt {|I|}}\sum_{m \in I}\ket{m,m}$ is a unit quantum state.
The normalization constant $A$ can be $\Omega(n^2)$ in general, since for sparse ${ \Gamma}$ the number of operators in $S$ is $\Theta(n^2)$,  and all their expectations can be nonzero. 
However, for some initial states $\rho$ that are of low energy, including some product states, we can prove that most expectations are zero and  $A=\cO( n)$. 
Then, if $|I| \propto n$, the expectation $\langle H_I\rangle/n$ is an intensive property that can be obtained from the overlap $\langle \psi_I \ket{\rho;S}$. 
More generally, if $S$ contains bosonic operators of degree $k>2$,we can perform other calculations like estimating the energies of interacting bosonic systems (non-quadratic Hamiltonians) in the corresponding states.

{\bf Qubit systems.} \label{sec:qubits}
The simplest example, which could be viewed as ``free qubits'', is when the Hamiltonian $H$ is a sum of local Pauli operators each acting non-trivially on only one qubit, i.e., $S=\{X_1,Y_1,Z_1,\ldots,X_n,Y_n,Z_n\}$. 
The IP is satisfied
and ${H}_S$ is Hermitian due to the orthogonality of Pauli operators. The dimension of the shadow state is only $M=3n$ and we can encode the expectations of exponentially many qubits into a shadow state of exponential dimension,
 which uses only $\cO(\log(n))$ many qubits. 
 In this example the qubits do not interact, and hence the system might not be very interesting. 
A different example is when $S$ contains all $n$-qubit Pauli operators. Specifically, for $i,j\in \{0,1\}^n$, let $P_{ij} := X_1^{i_1}Z_1^{j_1} \otimes \cdots \otimes X_n^{i_n}Z_n^{j_n}$ and let $S:=\{P_{ij}:i,j\in \{0,1\}^n\}$. Pauli operators are orthogonal, implying Hermiticity of ${H}_S$, and $|S|=M=4^n$. The shadow state is $\ket{\rho;S} = \frac 1 {\sqrt A}\sum_{i,j\in \{0,1\}^n} \langle P_{ij} \rangle \ket{i,j }$, 
where $\langle P_{ij} \rangle =\tr (\rho P_{ij})$ is the expectation. 
The normalization constant is $A=\sum_{i,j}|\langle P_{ij}\rangle|^2 = 2^n \tr (\rho^2)$, being proportional to the purity of $\rho$.

Let $V_S$ be the ``Bell rotation'', which maps between the Bell basis and the standard basis; see \Cref{fig:Bellsampling}. 
 As we show in~Supp. Note 3, the shadow state of a pure state $\rho=\ketbra \psi$ for this $S$ is    
 $\ket{\rho;S} = V_S \ket{\psi} \otimes \ket{\barpsi}$. We can then prepare the shadow state efficiently whenever we have an efficient circuit to prepare $\ket{\psi}$. (This also gives an efficient circuit for $\ket{\barpsi}$ by complex conjugating all the elementary gates in the circuit for preparing $\ket{\psi}$.) Similarly, it is possible to perform shadow Hamiltonian simulation since, for any unitary $U$, the shadow state corresponding to  $U\ket{\psi}$ is simply $U \ket{\psi} \otimes \overline{U \ket \psi}$.
\begin{figure}[htb]\centering
\includegraphics[scale=.35]{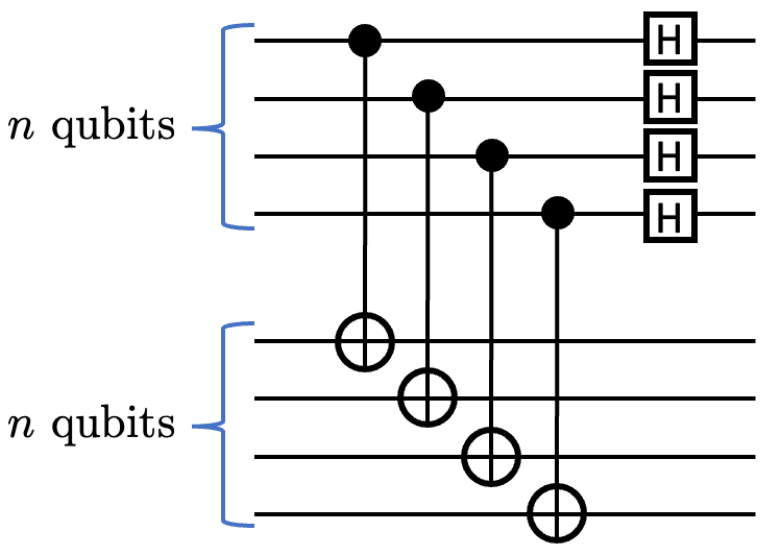} 
\caption{The operation $V_S$ for a system of $n=4$ qubits. The two-qubit gates are CNOTs and H are Hadamard. When acting on $n$ Bell pairs, these gates output a state in the computational basis.}
\label{fig:Bellsampling}
\end{figure}

{\bf Lie-algebraic systems.} \label{sec:lie_algebras}
The case of Lie algebras is a natural one to consider 
in shadow Hamiltonian simulation because the IP is automatically satisfied. More precisely,
a \emph{Lie algebra} $\mathfrak{g}$ is a vector space with a \emph{Lie bracket} binary operation $[\cdot,\cdot] : \mathfrak{g} \times \mathfrak{g} \rightarrow \mathfrak{g}$.
In our setting, the Lie bracket is simply the commutator between operators or matrices: $[x,y]=xy-yx$. Hence, if the Hamiltonian $H$ belongs to a Lie algebra $\mathfrak{g}$, this  implies that any basis of $\mathfrak{g}$ can be chosen as the set $S$ of operators that satisfies the IP.
Indeed, we have seen some examples already: the set of quadratic fermionic operators span the Lie algebra $\mathfrak{so}(2n)$, quadratic bosonic operators span the symplectic Lie algebra $\mathfrak{sp}(2n)$, and all Pauli operators span the Lie algebra $\mathfrak{u}(2^n)$. 
Other examples are spin systems with associated Lie algebras $\mathfrak{su}(2)$ or, more generally, $\mathfrak{su}(n)$.
  This provides a unifying framework for a broad class of situations where shadow Hamiltonian simulation is applicable.
Specifically, when the dimension of the Lie algebra is polynomial in the system size, shadow Hamiltonian simulation is able to tackle quantum systems that are of exponential size.

Let $S=\{O_1,\ldots,O_M\}$ be the operator set and assume it spans the Lie algebra $\mg$.
To apply shadow Hamiltonian simulation, we need to construct an ${H}_S$ that is also Hermitian. This ${H}_S$ is determined by
the structure constants of $\mg$, which are the $f_{jkl}$ in the commutations $ [O_j,O_k] = \sum_{l=1}^M f_{jkl} O_l$. Hermiticity of${H}_S$ can follow from the orthogonality of the $O_j$'s in finite-dimensional systems. Two assumptions ensure that ${H}_S$ can be efficiently simulated: i) having efficient sparse access to the structure constants, and ii) only a few ($\polylog(M)$) terms in $H$ do not commute with a given $O_j$ and these terms can be efficiently computed. 
Additionally, many initial shadow states can be efficiently prepared. Examples are those of lowest-weight states $\ket{{\rm lw}}$, generalizing the free vacuum in the fermionic case,  and
those corresponding to ``generalized coherent'' states of $\mg$. These are obtained as $e^{\mg} \ket{{\rm lw}}$, where $e^{\mg}$ is a unitary in the group induced by $\mg$, and generalize the Gaussian fermionic states discussed when $\mg \equiv \mathfrak{so}(2n)$~\cite{Note1}. We give more details in~\cref{sec:methods}.

{\bf Green's functions and other correlators.}
Shadow Hamiltonian simulation can be extended to encode expectations of products of different time-dependent operators in the amplitudes of a quantum state. 
This is most relevant in the Heisenberg picture, where the time-dependent expectation of an operator $O$ on the state $\rho(t)$ is alternatively described as the expectation of a time-dependent operator $O(t)$ on the initial state $\rho(0)$. The evolution of the operator is given by Heisenberg's equations of motion. 
More generally, we can consider products of operators at different times in the Heisenberg picture, e.g. $O_1(t) O_2(t')$. Let $H$ and $S$ satisfy the IP and define the pure state
\begin{align}
    \ket{\rho;S (t,t')} = \frac{1}{\sqrt{A}} \begin{pmatrix}
          \langle O_1(t) O_1(t') \rangle\cr  \langle O_1(t) O_2(t') \rangle\cr \vdots \cr  
          \langle O_M(t)  O_{M-1}(t')\rangle \cr  
          \langle O_M(t)  O_M(t')\rangle 
    \end{pmatrix} ,
\end{align}
where $\langle O_m(t) O_{m'}(t')\rangle$ are the expectations with respect to the initial state $\rho$, and $A>0$ is for normalization.
The following result, shown in~Supp. Note 1,
extends shadow Hamiltonian simulation 
to this scenario. 

\begin{theorem}
\label{thm:main2}
The state  $\ket{\rho;S (t,t')}$ satisfies
\begin{align}
\label{eq:shadowproduct_schrodinger}
    \frac{\partial}{\partial t}\ket{\rho;S(t,t')} &= -\mathrm{i} ({H}_S \otimes \one_M) \ket{\rho;S(t,t')}\;, \\
    \frac{\partial}{\partial t'}\ket{\rho;S(t,t')} &= -\mathrm{i} (\one_M \otimes {H}_S) \ket{\rho;S(t,t')} \;.
\end{align}
\end{theorem}

We can then encode two-time correlators and Green's functions in the amplitudes of a quantum state. To prepare $\ket{\rho;S(t,t')}$ from $\ket{\rho;S(0,0)}$ efficiently on a quantum computer, it suffices for ${H}_S$ to be Hermitian and to have efficient sparse access to ${H}_S$. The procedure involves evolving with ${H}_S$ acting on the first register for time $t$ and evolving with the same ${H}_S$ acting on the second register for time $t'$.
In addition, it is often possible to prepare the initial state $\ket{\rho;S(0,0)}$ efficiently, if the set of initial expectations contains structure that can be expressed succinctly, similar to the examples discussed above.

Importantly, \cref{thm:main2} can be generalized in two directions: i) we can encode higher order correlation functions $\langle O_{m_1}(t_1) \dots O_{m_q}(t_q) \rangle$, where all the $O_{m_j}$'s are in the same operator set $S$ and evolve under $H$ for different times $t_1,\ldots,t_q$, and ii) when the operators in the product belong to different operator sets $S_1,\ldots,S_q$, each evolving with their own distinct Hamiltonians $H_1,\ldots,H_q$, as long as the corresponding IPs between $S_j$ and $H_j$ are satisfied for all $j$.

{\bf Operators in the Heisenberg picture.}
\label{sec:scrambling}
In a different extension, we can encode a representation of the time-evolved operators themselves in the Heisenberg picture, rather than their expectations, without making reference to the state of the system. 
Consider the operator set $S=\{O_1,\ldots,O_M\}$ and let $Z:=\sum_{m=1}^M z_m O_m$, where $z_m \in \mathbb C$ for all $m \in [M]$.
The time-evolved version in the Heisenberg picture is $Z(t)$ and, if the Hamiltonian $H$ is time-independent, Heisenberg's equation of motion is $\frac{\rD}{\rD t}Z(t) = {\rm i}[H,Z(t)]$.
If $H$ and $S$ further satisfy the IP, we can express $Z(t)=\sum_{m=1}^M z_m(t) O_m$, where $z_m(t) \in \mathbb C$. These coefficients evolve according to
\begin{align}
\label{eq:schrodingercoeff}
  \frac{\rD}{\rD t}z_m(t)  = -{\rm i} \sum_{m'=1}^M h_{m'm} z_{m'}(t) \;,
\end{align}
which is a direct consequence of Heisenberg's equation of motion. The coefficients $h_{m'm}$ are the entries of the matrix ${H}_S$. If this is Hermitian, \cref{eq:schrodingercoeff} is also a Schr\"odinger equation, and it is natural to consider the operator-vector mapping $Z(t) \mapsto \ket{Z(t)} \propto \sum_{m=1}^M z_m(t) \ket m$. In this way, we can encode the time-evolved operator in a quantum state, and  \cref{eq:schrodingercoeff} implies
\begin{align}
  \frac{\rD}{\rD t}\ket{Z(t)}  = -{\rm i} \overline{{H}_S} \ket{Z(t)} \;.    
\end{align}
A quantum algorithm that prepares $\ket{Z(t)}$ involves  preparing the initial state $\ket{Z(0)}$ and performing time evolution with the Hamiltonian  $\overline{{H}_S}$ for time $t$. The algorithm is efficient when these two steps can be implemented efficiently. The analysis can be extended to time-dependent Hamiltonians. In that case, it can be shown that the 
transformation that takes $\ket{Z(0)}$ to $\ket{Z(t)}$
is the evolution operator with the Hamiltonian $\overline{{H}_S(t-s)}$, where the final time $t$ is fixed and $s$ is increased from $0$ to $t$. This resembles the evolution of the operator in the Heisenberg picture that evolves backwards in time. More details are in~Supp. Note 1, where we discuss the case of quantum circuits as well.

The state encoding a time-evolved operator can let us probe many interesting properties of the system. One example is operator growth, which relates to the phenomenon of \emph{quantum scrambling} \cite{schuster2023operator,swingle2016measuring,landsman2019verified,joshi2020quantum,blok2021quantum,mi2021information}. Suppose for simplicity that the set $S$ consists of all Pauli operators $P_{ij}$ on $n$ qubits. We can define an observable $W$ which is diagonal in the computational basis; on basis element $\ket{i,j}$, it takes value equal to the Hamming weight of the operator $P_{ij}$, which counts the number of non-trivial Paulis and we denote $\text{wt}(P_{ij})$. Measuring $\langle Z(t)|W|Z(t)\rangle$ will compute the expected weight at time $t$ given by
\begin{equation}
    \langle Z(t)|W|Z(t)\rangle = \sum_{i,j} z_{ij}(t)^2 \text{wt}(P_{ij})
\end{equation}
Similar to~\cref{thm:main2}, we can also consider the evolution of products of operators and two-time correlators by encoding them in quantum states.

The approach can also be applied for learning certain unitary oracles.
For example, if $S$ is again the set of all Pauli operators $P_{ij}$ on $n$ qubits and the unitary that transforms $S$ is a Clifford operation $C$, we can use this approach to study the evolution of the local Pauli operations $X_1,Z_1,\ldots,X_n,Z_n$. Each of these can be encoded in a corresponding $2n$-qubit state, of a single amplitude 1 in a basis state corresponding to the local Pauli, and all other amplitudes are zero. The corresponding transformation on the $2n$-qubit state will simply be a permutation, since   $C$ maps  Pauli operators to other Pauli operators, i.e., it permutes the elements of $S$. A measurement after the evolution gives the transformation for the local Pauli operator, given as a sequence of $2n$ bits that specify all the $n$ local Pauli operators in the Pauli product. Hence, with $2n$ experiments of this form, we obtain full knowledge of $C$. This requires $2n$ uses of each $C$ and $\bar C$, the complex conjugate. A similar approach is described in Ref.~\cite{low2009learning}. The analysis can be directly generalized to efficiently learn 
unitary oracles that map Pauli products to linear combinations of polynomially many Pauli products and beyond. Other examples
are learning unitary oracles corresponding to evolutions of free-fermion or free-boson models, as long as the evolved (creation or annihilation) operators are mapped to linear combinations involving polynomially many operators. For all these cases, we only require that state tomography 
of the corresponding $\ket{Z(t)}$'s can be performed efficiently.

\subsection*{Quantum speedups in realistic situations}

It was shown in Ref.~\cite{babbush2023exponential} that simulating exponentially-many classical harmonic oscillators is a BQP-hard problem. Since we recover this algorithm using shadow Hamiltonian simulation, this demonstrates that shadow Hamiltonian simulation is capable of providing an exponential quantum speedup in principle. In Ref.~\cite{babbush2023exponential} they also give an exponential oracle separation for the problem of simulating dynamics of coupled classical harmonic oscillators, which can also be solved using shadow Hamiltonian simulation. These systems involve ``long-range'' interactions, in which oscillators that are exponentially far apart can be coupled with a spring.

However, one can ask about quantum speedups in realistic physical geometries. In this case, properties like locality of interactions could be exploited, potentially enabling more efficient classical simulations of the dynamics.
For example, 
suppose we have a system of harmonic oscillators or free fermions on a lattice in $D=3$ spatial dimensions. Simulating time evolution within shadow Hamiltonian simulation has cost almost linear in the evolution time $t$ and polylogarithmic in system size. Classically, one must only simulate the ``lightcone'' of the region of interest, with spacetime volume scaling as $ \sim t^4$. Thus we achieve a \emph{quartic} quantum speedup over classical algorithms such as finite-difference time domain (FDTD),
which 
encodes the shadow state in a vector and simulates
time evolution via matrix-vector multiplication.
More generally, for lattices in $D$ spatial dimensions, the size of the lightcone is $\cO(t^D)$ and classical algorithms can have complexity $\cO(t^{1+D})$,
while the complexity of shadow Hamiltonian simulation is $\tilde \cO(t)$.

\section{Discussions}
\label{sec:conclusions}

We devised a novel approach to quantum simulation
that uses compressed versions of the full quantum states.
This shadow Hamiltonian simulation approach 
allows us to encode limited information about a system's quantum state, like the expectations of a relevant set of operators at any time, in the amplitudes of a different ``shadow state''. We can then exploit the exponential dimension of the shadow state and encode, for example, 
the expectations of operators acting on systems of exponential size.
By doing so we unveiled applications to  quantum systems like fermions and bosons  with an exponential number of modes. For these, we described computations that can be carried efficiently via shadow Hamiltonian simulation, while these computations are hard for classical or traditional quantum approaches, as preparing the full states would require exponential resources. Note that while free-boson and free-fermion systems are integrable and efficient classical solutions are possible, that efficiency is lost when analyzing systems that are exponentially large, as considered in this work.

Additionally, we demonstrated how the ideas underlying shadow Hamiltonian simulation can be applied to two other problems. In one problem, we described how to encode the expectations of two-time correlators or other products of time-dependent operators in a different quantum state. In the other problem we described an approach for encoding operators in the Heisenberg picture in quantum states, without referencing to the state of the system. These developments enable the solution to other problems; an example is studying the growth of the light cone associated with an evolving operator.

Shadow Hamiltonian simulation also shares features of prior results and can improve them, further highlighting the potential of our approach. For example, first-quantization methods enable the simulation of exponentially many fermionic modes as well.  However, known approaches that use first quantization (cf.,~\cite{Su2021bb}) require one register per fermion, and the number of registers can be exponentially large when the number of fermions is exponential (e.g., at half filling).
In contrast, shadow Hamiltonian simulation uses a number of registers that depends on the set of operators $S$, and {\em not} on the number of fermions. 
While it might be possible to adapt first-quantized methods to encode partial information on the fermionic state using fewer registers (e.g., $k$ registers to encode the $k$-RDM), the corresponding fermionic states would often be highly mixed and
our approach based on pure shadow states 
still appears more powerful. For example, it is possible to prove an exponential separation (in the number of modes) in sample complexity between the cases where access to copies of the shadow states is given and where copies of the mixed states that encode the $k$-RDM is given. Additionally, shadow Hamiltonian simulation can deal with free-fermion Hamiltonians that are \emph{not} number preserving, which is not the case of known first-quantization methods. This is useful for studying
models like, for example, the BCS Hamiltonian that appears in superconductivity.
Nevertheless, a main feature of first-quantization methods is that they can simulate interacting fermionic Hamiltonians, but their scaling can be polynomial in the number of modes.

Also, there is an inherent relation between the quantum algorithm of
Ref.~\cite{babbush2023exponential} for simulating exponentially many coupled classical oscillators and shadow Hamiltonian simulation. 
That quantum algorithm shares similar features in that the coordinates (position and momentum) of the oscillators are also encoded in the amplitudes of a quantum state, which can be shown to evolve unitarily. While those are classical variables,  their dynamics   is similar to the dynamics of the expectations of the corresponding operators of exponentially many coupled {quantum} oscillators.  This is a consequence of Ehrenfest's theorem, which relates classical and quantum evolution, applied to quadratic free-boson systems. 
Nevertheless, with shadow Hamiltonian simulation
we can perform more complex calculations that involve expectations of higher-order correlations in bosonic systems, going beyond
Ref.~\cite{babbush2023exponential}.

We anticipate several future directions. 
For instance, we mentioned that quantum algorithms for differential equations might be used when the IP  is satisfied but ${H}_S$ is not Hermitian, and understanding when such algorithms are efficient would be important. 
Another potential direction is the simulation of open quantum system dynamics, and whether shadow states can be used to perform this simulation more efficiently. Also, many
classical systems have an underlying Lie algebra structure, and it would be interesting to unveil other examples whose classical dynamics can be efficiently simulated on a quantum computer using shadow states. 
Furthermore, it might be possible to address interacting systems where the IP is not satisfied by considering products of operators in $S$ up to some degree (cutoff), and attempt an approximate simulation using shadow states.

\section{Methods}
\label{sec:methods}

For shadow Hamiltonian simulation to be efficient, the simulation of ${H}_S$ has to be implemented efficiently and the initial shadow state $\ket{\rho(0);S}$ has to be prepared efficiently. These take place in many interesting cases, and we provide the conditions and methods for efficiency in the examples analyzed.

\subsection*{Efficient Hamiltonian simulation}

Shadow Hamiltonian simulation involves 
the simulation of ${H}_S$ for time $t$ and, to this end, we can use known techniques for standard Hamiltonian simulation like quantum signal processing~\cite{LC17}.
In general, we will consider cases where the number of distinct $O_m$'s in $S$ is exponentially large in some problem size, or where the Hamiltonian $H$ is also a sum of exponentially many such operators. Since ${H}_S$
is determined from $H$ and $S$, specifying it and simulating it in general will require exponential complexity. Nevertheless, as we are interested in problem instances that can be simulated efficiently, we will consider those with a succinct presentation for ${H}_S$ that allows for efficient simulation. By efficient simulation we mean an algorithm of complexity $\poly(t)$ and $\polylog(M)$, and also polynomial in some norm of ${H}_S$.

Quantum signal processing and related efficient approaches for standard Hamiltonian simulation
assume access to a ``block-encoding'' of the Hamiltonian.
This is a unitary operator acting on an enlarged space that contains the Hermitian operator 
${H}_S/\Lambda$ in one of its blocks. The constant $\Lambda>0$ depends on some norm of ${H}_S$ and what type of access to ${H}_S$ is given. The optimal (query) complexity of such Hamiltonian simulation algorithms is $\cO(t \Lambda + \log(1/\epsilon))$, where $\epsilon>0$ is the error. Hence, if the block-encoding can be constructed efficiently, the algorithm is efficient as desired. To this end, a commonly used access model for Hamiltonian simulation
is the sparse matrix model. In this model, the entries of a $d$-sparse matrix $A$ can be accessed efficiently as follows. There is a black-box that outputs the $(j,k)^{\rm th}$ entry of the matrix on input $(j,k)$. The black-box also outputs the locations of the $d$ nonzero entries in every row or column. Using this black-box a constant number of times, it is possible to build the block-encoding of $A/\Lambda$, where $\Lambda:=d \|A\|_{\max}$ in this case. This model can be adapted to the shadow Hamiltonian simulation approach. Basically, what we need is for ${H}_S$ to be $d$-sparse and efficiently accessible as explained. Three sufficient conditions that enable this are: i) that the number of operators in the linear combination $[H,O_m]=\sum_{m'} h_{mm'}O_{m'}$ is at most $d$ and $d=\polylog(M)$ or constant,  ii)
that each $h_{mm'}$ can be computed efficiently on input $(m,m')$, and iii) that the $m'$ corresponding to the $\ell^{\rm th}$ nonzero coefficient $h_{mm'}$ in the linear combination can also be computed efficiently, i.e., 
 $[H,O_m]=\sum_{\ell =1}^d h_{mm'(\ell)}O_{m'(\ell)}$.
 In this case, $\|{H}_S\|_{\max} = \max_{m,m'}|h_{m,m'}|$. Furthermore, the block-encoding 
 of ${H}_S/(d \|{H}_S\|_{\max})$ can be constructed efficiently under the conditions, giving an efficient Hamiltonian simulation algorithm by means of, e.g., quantum signal processing.

We explain how to meet the three conditions for the examples discussed in this article, but essentially the same techniques can be applied more broadly. Assume $S$ is the set of quadratic fermionic operators. The commutation between two such operators $O_m$ and $O_{m'}$ gives a linear combination of at most two other quadratic fermionic operators. Then, if  $H$ is a free-fermion Hamiltonian, we can assume that at most $p$ terms in $H$ do not commute with a given $O_m$, and hence $d=2p$. This occurs if the matrix of interaction strengths ${ \Gamma}$ is $d'$-sparse, so that $p \le 2 d'$: two quadratic fermionic operators $c_ic_j$ and $c_kc_l$ with non overlapping indices commute. 
Additionally, the entries of ${H}_S$   are proportional to the entries of ${ \Gamma}$, so that $\|{H}_S\|_{\max} \propto \|{ \Gamma}\|_{\max}$. Given sparse access to ${ \Gamma}$, it is also possible to compute and find the locations of all nonzero elements in the commutator $[H,O_m]$, by explicitly computing the $p$ nonzero commutations. This requires accessing ${ \Gamma}$, $\cO(p)$ times, resulting in an efficient Hamiltonian simulation algorithm of complexity $\cO(d' (t d' \|{ \Gamma}\|_{\max}+\log(1/\epsilon)))$. This is the number of queries to ${ \Gamma}$ and can be improved with a refined analysis.  

A similar analysis applies to the case of free-boson systems, showing that if the matrix of interaction strengths ${ \Gamma}$ is sparse and can be accessed efficiently, then 
${H}_S$ can be simulated efficiently. A more detailed study to do this simulation is given in Ref.~\cite{babbush2023exponential}, which assumes access to a mass matrix whose entries are the masses $m_j$, and a $d'$-sparse spring constant matrix whose entries are the spring constants $\kappa_{jk}$. The (query) complexity of simulating ${H}_S$ in this case was shown to be $\cO (t \sqrt{d'}\Lambda + \log(1/\epsilon) )$, where $\Lambda=\max_{i,j,k} \sqrt{\kappa_{jk}/m_i}$. 

The case of qubits, where $S$ is the set of all Pauli products, follows directly from the relation between the shadow state and $\ket \psi \otimes \overline{\ket \psi}$.
This implies ${H}_S= V_S (H \otimes \one_{2^n}+\one_{2^n} \otimes \bar H) (V_S)^\dagger$, where $V_S$ maps between basis states and Bell states and can be implemented with $\cO(n)$ two-qubit gates.
Then, ${H}_S$ can be efficiently simulated if $H$ can, by using standard Hamiltonian simulation with $H$ and $\bar H$. For $d$-sparse $H$, the complexity is $\cO (t d \|H\|_{\max} + \log(1/\epsilon) )$. 

Last, 
for the case of Lie algebras, the Hamiltonian is $H=\sum_k \alpha_k O_k$, with $\alpha_k \in \mathbb C$ and $O_k \in \mg$. We can satisfy the three conditions outlined above as long as $[O_j,O_k]=\sum_l f_{jkl}O_l$ is a linear combination of at most $d'$ operators, and the number of terms in $H$  that do not commute with any $O_j$ is at most $p$. If   we have access to a black-box that, given $j$, outputs all the $p$ terms in $H$ that do not commute with $O_j$ together with their indices, and another black-box that, given $(j,k)$, outputs all the $d'$ nonzero terms in 
$\sum_l f_{jkl}O_l$ together with their indices, we can use these black boxes $\cO(pd')$ times to create access to $d$-sparse ${H}_S$, where $d \le pd'$. This generalizes the case of quadratic fermionic operators in a natural way. The (query) complexity for simulating ${H}_S$ in the case of Lie algebras  is $\cO (t \Lambda + \log(1/\epsilon) )$, where $\Lambda=p d'\|{H}_S\|_{\max}$, and 
$\|{H}_S\|_{\max} \le \max_k |\alpha_k| \max_{jkl}|f_{jkl}|$.

\subsection*{Efficient preparation of some initial shadow states}
\label{sec:initialstates}

We also seek a procedure that prepares the initial
shadow state $\ket{\rho(0);S}$ efficiently.
This is not always possible, but for many interesting states that can be expressed succinctly, efficient quantum circuits might be constructed.

First, we can consider a model where a black-box unitary computes the expectations $\langle O_m \rangle$ on input $\ket m$. Starting from a superposition state like $\frac 1 {\sqrt M}\sum_{m=1}^M \alpha_m \ket m$, $\sum_m |\alpha_m|^2=1$, we can use the black-box and apply a conditional rotation to map
\begin{align}
    \frac 1 {\sqrt M}\sum_{m=1}^M \alpha_m \ket m \mapsto x \ket 0 \ket{\rho;S} + \ket{\rho;S^\perp} \;,
\end{align}
where $\ket{\rho;S^\perp}$ is orthogonal to $\ket 0 \ket{\rho;S}$ and $x\ge 0$ is the overlap. Standard techniques like amplitude amplification can be then used to distill $\ket{\rho;S}$ from this state with complexity $\cO(1/x)$.  The approach is efficient as long as $1/x=\cO(\polylog \; M)$.

Next we consider the case of fermions, where $S$ is the set of quadratic fermionic operators.
The free vacuum $\rho(0)=\rho= \ketbra{\rm vac}$ is the pure state containing no fermions.
In this state, the expectations of quadratic Majorana operators are
\begin{align}
\label{eq:barevacuum}
    \langle c_j c_k \rangle = \begin{cases}
        {\rm i} & (j,k) = (2l-1,2l) \ \text{for} \ l \in [n] ,\\
        0 & \text{otherwise}.
    \end{cases}
\end{align}
This follows from the relation between Majorana fermions and bare (free) fermions, such that $c_{2j-1}:=a^\dagger_j + a^{}_j$ and $c_{2j}:={\rm i}(a^\dagger_j - a^{}_j$) for all $j \in [n]$, where $a^\dagger_j$ ($a^{}_j$) are the fermionic creation (annihilation) operators of the  bare fermions. The vacuum state satisfies $a^{}_j \ket{\rm vac}=0$ for all $j$ and also $\bra{{\rm vac}} a^{}_ja^\dagger_j \ket{{\rm vac}}=\bra{{\rm vac}} (1-a^\dagger_ja^{}_j \ket{\rm vac}=1$.
The corresponding shadow state with respect to $S$ is (up to a global phase)
\begin{equation}
\label{eq:shadowvacuumfermion}
    \ket{\rho;S} = \frac{1}{\sqrt{n}} \sum_{l=1}^n |2l-1,2l\rangle \;.
\end{equation}
This state has a simple form, i.e., it is an equal superposition over basis states and can be prepared on a quantum computer in
time $\cO(\log(n))$ using elementary gates.
It is equivalent to the shadow state corresponding to the completely filled state, which is the state with one fermion in every mode.
This is implied by the fact that all operators in $S$ change sign under a particle-hole transformation, which places a fermion if the mode is empty or removes it if the mode is occupied~\cite{zirnbauer2021particle}: it
 maps $a^\dagger_j \mapsto a^{}_j$ and $a^{}_j \mapsto a^\dagger_j$. 
More generally, for this choice of $S=\{c_jc_k\}_{1\le j<k\le n}$, the shadow state of any $\rho$ is invariant under the particle-hole transformation (up to a global sign). This property can be useful to identify other shadow states that can also be prepared efficiently.
Indeed, other examples
of shadow states that can be prepared efficiently are those corresponding to fermionic product states $\ket{\psi_I} = \prod_{j \in I}a^\dagger_j \ket{\rm vac}$, $I \subseteq[n]$, that contains one fermion 
in the modes $I$ and no fermions elsewhere.
These   can be obtained from \cref{eq:shadowvacuumfermion}
by simply changing some signs of the amplitudes, which can be done efficiently if we have access to an oracle that specifies $I$ efficiently. The shadow states of
more general fermionic Gaussian states (Slater determinants) $\ket{\phi}$ could be prepared, for example, by means of Thouless's theorem~\cite{blaizot1986quantum}, which asserts
that $\ket \phi$ can be obtained as $e^{-\mathrm{i} \tilde H} \ket{\psi_I}$. 
Here, $\tilde H$ is also a free-fermion Hamiltonian.
If $\tilde H$
can be accessed efficiently as explained above, then the corresponding Hamiltonian $\tilde {H}_S$ required for mapping shadow states
can also be implemented efficiently. 
In general, there will be other shadow states that can also be prepared efficiently, and which do not necessarily correspond to fermionic Gaussian states.

The efficient preparation of certain shadow states for bosonic systems is already discussed in~\cref{sec:results}.
For qubit systems, if $\rho = \ketbra \psi$ is pure, shadow states with respect to the set $S$ of all Pauli operators are a simple transformation of $\ket \psi \otimes \overline{\ket \psi}$: $\ket{\rho;S}=V_S \ket \psi \otimes \overline{\ket \psi}$. These can be efficiently prepared as long as $\ket \psi$ (and $\overline{\ket \psi}$) can be efficiently prepared.

Last, for the case of Lie algebras, we generalize the case of fermions.  Assume further that  the Lie algebra $\mg$ is compact and semisimple. This enables us to present 
$\mg$ in the Cartan-Weyl decomposition. The Cartan subalgebra is a (maximal) set of commuting operators 
and the other terms are ``raising'' or ``lowering'' operators, playing a similar role to the fermionic creation and annihilation operators.
There is a special \emph{lowest-weight} state $\ket{{\rm lw}}$ --a generalization of the free vacuum state of fermions-- that has the following properties: it is an eigenstate of all diagonal operators in the Cartan subalgebra, and it is annihilated by all lowering operators.
If $h_i$, $i \in [r]$, are the diagonal operators, then $h_i \ket{{\rm lw}}= e_i \ket{{\rm lw}}$, and the shadow state of $\ket{{\rm lw}}$ is 
\begin{align}
  \ket{\rho;S}= \frac 1 {\sqrt A} \sum_{i=1}^r e_i \ket i \;.
\end{align}
The eigenvalues $e_i$ are purely imaginary and $A=\sum_i |e_i|^2$ is a normalization constant, which can often be computed efficiently. In principle, $r$ can be exponentially large. Nevertheless, the shadow state might be prepared efficiently if the $e_i$'s are described succinctly,
which turns out to be the case in many examples.
For example, in free fermions, the relavant algebra of quadratic fermionic operators is ${\mathfrak {so}}(2n)$. 
The lowest-weight state of ${\mathfrak {so}}(2n)$ is the free vacuum, where all eigenvalues of the diagonal operators are $-1$. As we discussed, the shadow state $\ket{\rho;S}$ in this case is a simple equal superposition state.

\bibliographystyle{unsrt}

\begin{thebibliography}{10}

\bibitem{Llo96}
Seth Lloyd.
\newblock Universal quantum simulators.
\newblock {\em Science}, 273(5278):1073--1078, 1996.

\bibitem{SOGKL02}
R.~D. Somma, G.~Ortiz, J.~E. Gubernatis, E.~Knill, and R.~Laflamme.
\newblock Simulating physical phenomena by quantum networks.
\newblock {\em Phys. Rev. A}, 65:042323, 2002.

\bibitem{BAC07}
Dominic~W. Berry, Graeme Ahokas, Richard Cleve, and Barry~C. Sanders.
\newblock Efficient quantum algorithms for simulating sparse {H}amiltonians.
\newblock {\em Comm. Math. Phys.}, 270:359, 2007.

\bibitem{WBH+10}
Nathan Wiebe, Dominic Berry, Peter H{\o}yer, and Barry~C. Sanders.
\newblock Higher order decompositions of ordered operator exponentials.
\newblock {\em J. Phys. A: Math. Theor.}, 43:065203, 2010.

\bibitem{CW12}
Andrew~M. Childs and Nathan Wiebe.
\newblock Hamiltonian simulation using linear combinations of unitary
  operations.
\newblock {\em Quantum Inf. Comput.}, 12:901--924, 2012.

\bibitem{BCC+15}
Dominic~W. Berry, Andrew~M. Childs, Richard Cleve, Robin Kothari, and
  Rolando~D. Somma.
\newblock Simulating {H}amiltonian dynamics with a truncated {T}aylor series.
\newblock {\em Phys. Rev. Lett.}, 114:090502, 2015.

\bibitem{LC17}
Guang~Hao Low and Isaac~L. Chuang.
\newblock Optimal {H}amiltonian simulation by quantum signal processing.
\newblock {\em Phys. Rev. Lett.}, 118:010501, 2017.

\bibitem{LC19}
Guang~Hao Low and Isaac~L. Chuang.
\newblock Hamiltonian simulation by qubitization.
\newblock {\em Quantum}, 3:163, 2019.

\bibitem{campbell2019random}
Earl Campbell.
\newblock Random compiler for fast {H}amiltonian simulation.
\newblock {\em Phys. Rev. Lett.}, 123(7):070503, 2019.

\bibitem{zlokapa2024hamiltonian}
Alexander Zlokapa and Rolando~D Somma.
\newblock Hamiltonian simulation for low-energy states with optimal time
  dependence.
\newblock {\em arXiv:2404.03644}, 2024.

\bibitem{gruneis2013explicitlybb}
Andreas Gr{\"u}neis, James~J. Shepherd, Ali Alavi, David~P. Tew, and George~H.
  Booth.
\newblock Explicitly correlated plane waves: Accelerating convergence in
  periodic wavefunction expansions.
\newblock {\em The Journal of Chemical Physics}, 139(8), 2013.

\bibitem{Su2021bb}
Yuan Su, Dominic Berry, Nathan Wiebe, Nicholas Rubin, and Ryan Babbush.
\newblock {Fault-tolerant quantum simulations of chemistry in first
  quantization}.
\newblock {\em PRX Quantum}, 4(2):040332, 2021.

\bibitem{babbush2023exponential}
Ryan Babbush, Dominic~W. Berry, Robin Kothari, Rolando~D. Somma, and Nathan
  Wiebe.
\newblock Exponential quantum speedup in simulating coupled classical
  oscillators.
\newblock {\em Phys. Rev. X}, 13:041041, 2023.

\bibitem{mi2021information}
Xiao Mi, Pedram Roushan, Chris Quintana, Salvatore Mandra, Jeffrey Marshall,
  Charles Neill, et~al.
\newblock Information scrambling in computationally complex quantum circuits.
\newblock {\em Science}, 374:1479, 2021.

\bibitem{aaronson2018shadow}
Scott Aaronson.
\newblock Shadow tomography of quantum states.
\newblock In {\em Proceedings of the 50th annual ACM SIGACT Symposium on Theory
  of Computing (STOC 2018)}, pages 325--338, 2018.

\bibitem{Ber14}
Dominic~W. Berry.
\newblock High-order quantum algorithm for solving linear differential
  equations.
\newblock {\em J. Phys. A: Math. Theor.}, 47:105301, 2014.

\bibitem{childs2021high}
Andrew~M Childs, Jin-Peng Liu, and Aaron Ostrander.
\newblock High-precision quantum algorithms for partial differential equations.
\newblock {\em Quantum}, 5:574, 2021.

\bibitem{takahashi2005thermodynamics}
Minoru Takahashi.
\newblock {\em Thermodynamics of one-dimensional solvable models}.
\newblock Cambridge University Press, 2005.

\bibitem{tong2016quantum}
David Tong.
\newblock The quantum hall effect: {TIFR} infosys lectures.
\newblock {\em arXiv:1606.06687}, 2016.

\bibitem{barouch1971statistical}
Eytan Barouch and Barry~M. McCoy.
\newblock Statistical mechanics of the {XY} model. {II.} {S}pin-correlation
  functions.
\newblock {\em Phys. Rev. A}, 3(2):786, 1971.

\bibitem{Rubin2018}
Nicholas Rubin, Ryan Babbush, and Jarrod McClean.
\newblock {Application of Fermionic Marginal Constraints to Hybrid Quantum
  Algorithms}.
\newblock {\em New Journal of Physics}, 20(5):053020, 2018.

\bibitem{KOS07}
Emanuel Knill, Gerardo Ortiz, and Rolando~D. Somma.
\newblock Optimal quantum measurements of expectation values of observables.
\newblock {\em Phys. Rev. A}, 75:012328, 2007.

\bibitem{bogoliubov1947theory}
N.~Bogoliubov.
\newblock On the theory of superfluidity.
\newblock {\em J. Phys}, 11(1):23, 1947.

\bibitem{scully1997quantum}
Marlan~O. Scully and M.~Suhail Zubairy.
\newblock {\em Quantum optics}.
\newblock Cambridge university press, 1997.

\bibitem{barthe2024gate}
Alice Barthe, M.~Cerezo, Andrew~T. Sornborger, Martin Larocca, and Diego
  Garc{\'\i}a-Mart{\'\i}n.
\newblock Gate-based quantum simulation of gaussian bosonic circuits on
  exponentially many modes.
\newblock {\em arXiv:2407.06290}, 2024.

\bibitem{Note1}
{\em To be precise, when $\mg$ is a real Lie algebra spanned by antiHermitian
  operators, then $e^{\mg}$ is unitary. In this case, it is the antiHermitian
  operator $iH$ the one that belongs to $\mg$.}

\bibitem{schuster2023operator}
Thomas Schuster and Norman~Y Yao.
\newblock Operator growth in open quantum systems.
\newblock {\em Phys. Rev. Lett.}, 131(16):160402, 2023.

\bibitem{swingle2016measuring}
Brian Swingle, Gregory Bentsen, Monika Schleier-Smith, and Patrick Hayden.
\newblock Measuring the scrambling of quantum information.
\newblock {\em Physical Review A}, 94(4):040302, 2016.

\bibitem{landsman2019verified}
Kevin~A Landsman, Caroline Figgatt, Thomas Schuster, Norbert~M Linke, Beni
  Yoshida, Norman~Y Yao, and Christopher Monroe.
\newblock Verified quantum information scrambling.
\newblock {\em Nature}, 567(7746):61--65, 2019.

\bibitem{joshi2020quantum}
Manoj~K Joshi, Andreas Elben, Beno{\^\i}t Vermersch, Tiff Brydges, Christine
  Maier, Peter Zoller, Rainer Blatt, and Christian~F Roos.
\newblock Quantum information scrambling in a trapped-ion quantum simulator
  with tunable range interactions.
\newblock {\em Physical Review Letters}, 124(24):240505, 2020.

\bibitem{blok2021quantum}
Machiel~S Blok, Vinay~V Ramasesh, Thomas Schuster, Kevin O’Brien, John-Mark
  Kreikebaum, Dar Dahlen, Alexis Morvan, Beni Yoshida, Norman~Y Yao, and Irfan
  Siddiqi.
\newblock Quantum information scrambling on a superconducting qutrit processor.
\newblock {\em Physical Review X}, 11(2):021010, 2021.

\bibitem{low2009learning}
Richard~A Low.
\newblock Learning and testing algorithms for the {C}lifford group.
\newblock {\em Physical Review A—Atomic, Molecular, and Optical Physics},
  80(5):052314, 2009.

\bibitem{zirnbauer2021particle}
Martin~R. Zirnbauer.
\newblock Particle-hole symmetries in condensed matter.
\newblock {\em J. Math. Phys.}, 62:021101, 2021.

\bibitem{blaizot1986quantum}
Jean-Paul Blaizot and Georges Ripka.
\newblock {\em Quantum theory of finite systems}.
\newblock MIT press, 1986.

\bibitem{jordan1928paulische}
P.~Jordan and E.~Wigner.
\newblock {\"U}ber das {P}aulische {{\"A}}quivalenzverbot.
\newblock {\em Zeitschrift f{\"u}r Physik}, 47(9):631--651, Sep 1928.

\bibitem{SandersPRL18}
Yuval~R. Sanders, Guang~Hao Low, Artur Scherer, and Dominic~W. Berry.
\newblock Black-box quantum state preparation without arithmetic.
\newblock {\em Phys. Rev. Lett.}, 122:020502, Jan 2019.

\bibitem{BS19}
Zvika Brakerski and Omri Shmueli.
\newblock ({P}seudo) random quantum states with binary phase.
\newblock In {\em Theory of Cryptography Conference}, pages 229--250. Springer,
  2019.

\bibitem{CHM21}
Jordan Cotler, Hsin-Yuan Huang, and Jarrod~R McClean.
\newblock Revisiting dequantization and quantum advantage in learning tasks.
\newblock {\em arXiv:2112.00811}, 2021.

\bibitem{watrous2018theory}
John Watrous.
\newblock {\em The theory of quantum information}.
\newblock Cambridge University Press, 2018.

\end{thebibliography}
\addcontentsline{toc}{section}{References}


\section*{Acknowledgements}

We thank Cristian Batista, Kipton Barros, David Gosset, Robert Huang, William Huggins,  Stephen Jordan, Marika Kieferova, Nicholas Rubin, and Nathan Wiebe for discussions.





\newpage
\onecolumn
\appendix
\section*{Supplementary information}

\label{sec:suppmat}

In this section we first provide the proofs of the main results, and later discuss broader properties of shadow states that are useful for their simulation and interpretation.

\subsection*{Supplementary Note 1: Main results}

{\bf Proof of Thm.~\ref{thm:main}.}
We begin by providing the proof to our main result in~Thm.~\ref{thm:main}. Let $U(t)$ be the evolution operator induced by $H$ and consider a general evolved state
$\rho(t) = U(t) \rho(0) U^\dagger(t)=\sum_\ell p_\ell \ketbra{\psi_\ell(t)}$,   where the $p_\ell$'s are probabilities and the $\ket{\psi_\ell(t)}$'s are pure states.
The amplitudes of $\ket{\rho(t);S}$ are proportional to 
$\langle O_m(t) \rangle=\sum_\ell p_\ell \bra{\psi_\ell(t)} O_m \ket{\psi_\ell(t)}$.
Schr\"odinger's equation $\frac{\rm d}{\rm dt}\rho(t)=-{\rm i}[H,\rho]$ and the IP imply 
\begin{align}
  \frac {\rm d} {{\rm d} t}  \langle O_m(t) \rangle& = 
  \sum_{\ell} p_{\ell}  \left(\frac {\rm d} {{\rm d} t}\bra{\psi_\ell(t)}\right)  O_m\ket{\psi_\ell(t)} +  \bra{\psi_\ell(t)}  O_m \left(\frac {\rm d} {{\rm d} t}\ket{\psi_\ell(t)}\right)\\
  &= \mathrm{i}  \sum_{\ell} p_{\ell} \bra{\psi_\ell(t)} [H,O_m]\ket{\psi_\ell(t)} \\
  &= -\mathrm{i} \sum_{m'=1}^M h_{mm'}  \langle O_{m'}(t) \rangle \;,
\end{align}
which is also \cref{eq:shadow_schrodinger}.
\qed

\vspace{.2cm}

{\bf Hermiticity of ${ H}_S$ in finite-dimensional systems.}
Next we show that the orthogonality of operators in $S$ ensures that the matrix ${ H}_S$ is Hermitian.
We provide the result in the following Lemma.

\begin{lemma}
    Let the system be finite-dimensional and the operators in $S$ be orthogonal, satisfying $\tr(O_m^\dagger O_{m'})=\lambda \delta_{mm'}$, for some $\lambda >0$. Then ${H}_S$ is Hermitian. 
\end{lemma}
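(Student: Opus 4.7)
The plan is to extract the matrix entries $h_{mm'}$ explicitly using orthogonality of the $O_m$ and then verify Hermiticity by a direct computation that uses $H^\dagger = H$ and cyclicity of the trace.

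First I would take the inner product of both sides of the invariance relation $[H,O_m]=-\sum_{m'}h_{mm'}O_{m'}$ with $O_{m''}$ in the Hilbert--Schmidt sense. That is, multiply by $O_{m''}^\dagger$ on the left and take the trace. Using $\tr(O_{m''}^\dagger O_{m'})=\lambda\delta_{m''m'}$, one obtains the closed-form expression
\begin{equation*}
h_{mm'} = -\frac{1}{\lambda}\,\tr\!\bigl(O_{m'}^\dagger[H,O_m]\bigr).
\end{equation*}
This is the step that makes the orthogonality assumption essential, since without it the $h_{mm'}$ are not uniquely determined by the invariance property.

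Next I would check that $h_{mm'}=\overline{h_{m'm}}$. Writing out
\begin{equation*}
\overline{h_{m'm}} = -\frac{1}{\lambda}\,\overline{\tr\!\bigl(O_m^\dagger[H,O_{m'}]\bigr)} = -\frac{1}{\lambda}\,\tr\!\bigl([H,O_{m'}]^\dagger O_m\bigr),
\end{equation*}
I would then expand $[H,O_{m'}]^\dagger = O_{m'}^\dagger H - H O_{m'}^\dagger$ using $H^\dagger=H$, so that
\begin{equation*}
\tr\!\bigl([H,O_{m'}]^\dagger O_m\bigr) = \tr(O_{m'}^\dagger H O_m) - \tr(H O_{m'}^\dagger O_m).
\end{equation*}
Cyclicity of the trace rewrites the second term as $\tr(O_{m'}^\dagger O_m H)$, and combining gives $\tr(O_{m'}^\dagger[H,O_m])$. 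Substituting back yields $\overline{h_{m'm}} = h_{mm'}$, so $\bm{H}_S$ is Hermitian.

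The argument is elementary, so there is no real obstacle. The only mild subtlety is the step where $H^\dagger=H$ is used together with the trace's cyclic property to convert $\overline{\tr(O_m^\dagger[H,O_{m'}])}$ into $\tr(O_{m'}^\dagger[H,O_m])$; writing out this single identity cleanly is essentially the whole content of the lemma. It is also worth remarking that finite-dimensionality is used only to guarantee that the Hilbert--Schmidt trace is well defined and cyclic on the relevant operators, so the hypothesis appears exactly where needed.
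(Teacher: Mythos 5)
Your proof is correct and follows essentially the same route as the paper's: extract $h_{mm'} = -\tfrac{1}{\lambda}\tr(O_{m'}^\dagger[H,O_m])$ from orthogonality, then use $H^\dagger=H$ together with conjugation under the trace and cyclicity to show $h_{mm'}=\overline{h_{m'm}}$. The paper's proof is a one-line chain of the same identities, so there is nothing further to add.
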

\begin{proof}
   Under the orthogonality condition, the entries of
   ${H}_S$ satisfy
    \begin{equation}
        h_{mm'} = -\frac 1 \lambda \tr(O^\dagger_{m'} [H, O_m]) = \underbrace{-\frac 1 \lambda \overline{\tr([O^\dagger_m, H] O_{m'})}}_{\text{(take Hermitian conjugate)}} = \underbrace{- \frac 1 \lambda\overline{\tr(O^\dagger_m [H, O_{m'}])}}_{\text{(cyclic property of trace)}} = \overline{h_{m'm}}.\vspace{-2em}
    \end{equation}
\end{proof}

\vspace{.2cm}

{\bf Proof of Thm.~\ref{thm:main2}.}
We now discuss the main result that shows how to encode two-time correlators and more complex expectations in quantum states.
    Let ${\bf O}(t) =(O_1(t),\ldots,O_M(t))^T$ and consider the vector of $M^2$ operators ${\bf O}(t) \otimes {\bf O}(t')$. Heisenberg's equations imply
    \begin{align}
        \frac{\partial}{\partial t} {\bf O}(t) \otimes {\bf O}(t') =\mathrm{i} [H, {\bf O}(t)] \otimes {\bf O}(t') = -\mathrm{i} \big({ H}_S {\bf O}(t)\big) \otimes {\bf O}(t') = -\mathrm{i} \big({ H}_S \otimes \one_M) \big({\bf O}(t) \otimes {\bf O}(t')\big) \;.
    \end{align}
    A corresponding equation can be obtained applying $\frac{\partial}{\partial t'}$.
    We can compute the expectation with respect to the initial state $\rho$ on either side 
    of this equation and obtain~\cref{eq:shadowproduct_schrodinger}, since we can alternatively express $\ket{\rho;S(t,t')}$ as the vector $\frac 1 {\sqrt A} \langle {\bf O}(t) \otimes {\bf O}(t') \rangle=\frac 1 {\sqrt A} \sum_{m,m'}\langle O_m(t) O_{m'}(t') \rangle \ket{m,m'}$.
\qed

\vspace{.2cm}

{\bf Shadow Hamiltonian simulation for quantum circuits.}
As discussed in~\cref{sec:formalism}, the shadow Hamiltonian simulation approach can be extended to the case of quantum circuits in a simple way.
Let $G=G^L\ldots G^1$ be a quantum circuit described as a sequence of gates $G^j$, $j \in [L]$. The analogous property to the invariance property discussed in~\cref{sec:formalism} in this case is 
\begin{equation}\label{eq:invariance_condition_unitary}
    (G^j)^\dagger O_m G^j=\sum_{m'} g^j_{mm'} O_{m'},
\end{equation}
for all $O_m \in S$, $m \in [M]$,  and $j \in [L]$, where $g^j_{mm'} \in \mathbb C$. These coefficients define an $M \times M$ matrix ${ G}^j_{S}$ for each $G^j$. If these matrices are also unitary---the analogous condition to requiring ${ H}_S$ be Hermitian---then we can simulate them on a quantum computer and perform maps between the corresponding shadow states.

\vspace{.2cm}

{\bf Time-dependent Hamiltonians and quantum circuits.}
In~\cref{sec:formalism} we described a way to encode time-evolving operators in quantum states. We provided details of the result for time-independent Hamiltonians and here we show the result for the time-dependent case. To this end, it suffices to consider the case of quantum circuits, since gates can be thought of as time evolutions with different Hamiltonians.

Let $G=G^L \ldots G^1$ be the quantum circuit composed of $L$ unitary gates $G^j$. The requirement for these gates and $S$ to satisfy the IP is given in \cref{eq:invariance_condition_unitary}. Accordingly, to study the evolution of an operator $Z=\sum_{m=1}^M z_m O_m$ as we act with the gates, we define the operators $Z_l:=(G^{L+1-l})^\dagger \ldots (G^L)^\dagger Z G^L \ldots G^{L+1-l}$ for all $l \in [L]$. If the IP is satisfied, then $Z_l =\sum_{m=1}^M z_{l,m} O_m$, where $z_{l,m} \in \mathbb C$. Also $Z_0:=Z$, i.e., $z_{0,m}=z_m$ for all $m \in[M]$. 
These $Z_l$'s are basically a (backwards) transformation of
$Z$ when acting with the last $l$ gates of the circuit, and we are interested in $Z_L$, which is the operator transformed by all gates.

Then, \cref{eq:invariance_condition_unitary} gives
the transformation on the coefficients:
\begin{equation}
\label{eq:schrodingercoeff2}
   Z_{l+1}=(G^{L-l})^\dagger Z_l G^{L-l}=\sum_{m=1}^M z_{l,m} \sum_{m'=1}^M g^{L-l}_{mm'}O_{m'} \implies  z_{l+1,m}=\sum_{m'=1}^M z_{l,m'} g_{m'm}^{L-l} \;.
\end{equation}
Let ${ G}^j_{S}$ be the matrix of entries $g_{mm'}^j$.
When this is unitary---the analogue of requiring ${ H}_S$ be Hermitian---\cref{eq:schrodingercoeff2} denotes a unitary transformation on the vector of coefficients $z_{l,m}$, for each $l$. We then consider the operator-vector mapping $Z_l \mapsto \ket{Z_l} \propto \sum_{m=1}^M z_{l,m} \ket m$. This provides an encoding of the time-evolved operators $Z_l$ in a quantum state, and \cref{eq:schrodingercoeff2} implies
\begin{align}
    \ket{Z_{l+1}} =({ G}^{L-l}_{S})^T \ket{Z_l} \;, \quad l=0,\ldots,L-1 \;.
\end{align}
A quantum algorithm that prepares $\ket{Z_L}$
is then a sequence of the unitaries $({ G}^{1}_{S})^T \ldots ({ G}^{L}_{S})^T$ acting on $\ket{Z_0}$. 
Note that applying the transpose of ${ G}^{L-l}_{S}$ is the analogue
to evolving with $\overline{{ H}_S}(t-s)$, when we consider the case of time-dependent Hamiltonians in continuous time. This is the result stated in~\cref{sec:formalism}.

\subsection*{Supplementary Note 2: Shadow Hamiltonian simulation for free-fermion systems} \label{sec:freefermiondetails}

The purpose of this section is to analyze
the details of each step in shadow Hamilton simulation on an example, that of free-fermion systems. In the main text we focused on the expectations of products of two fermionic operators for simplicity, but we start the analysis by considering single operators. However, the approach generalizes to expectations of higher-order operators in a straightforward way, by following Thm.~\ref{thm:main2}, as we explain. 

Like any quantum algorithm for quantum simulation, shadow Hamiltonian simulation consists of three steps: i) initial shadow state preparation, ii) evolution, and iii) measurement. The efficient preparation of initial shadow states has already been discussed in~\cref{sec:methods}. The measurement of an observable on the final shadow state in this example can approximate a function of the expectations. However, it is perhaps more interesting to obtain the overlap between the final shadow state and some target state, as this will be a linear combination of the expectations. Such linear combinations can describe properties like energies of the system at later times, as discussed in~\cref{sec:formalism}. Standard techniques can be used to perform these measurements. 
Hence, the key part we analyze in this section
is the efficient simulation of the evolution, and the assumptions on the Hamiltonians to this end.

\subsubsection*{Free-fermion Hamiltonians and the invariance property}

The Hamiltonian of a free-fermion system is often presented as 
\begin{align}
\label{eq:fermionicHamiltonian}
    H = \sum_{j,k=1}^n \alpha_{jk} a^\dagger_j a^{\!}_k + \beta_{jk} a^{\!}_j a^{\!}_k -\overline{\beta_{jk}} a^\dagger_j a^\dagger_k \;,
\end{align}
where $a^\dagger_j$ ($a^{\!}_j$) are fermionic creation (annihilation) operators satisfying anticommutation relations
$\{a_j,a_k\}=\{a^\dagger_j,a^\dagger_k\}=0$
and $\{a_j,a^\dagger_k\}=\delta_{jk}$, where $\{a,b\}=ab+ba$. The coefficients 
$\alpha_{jk},\beta_{jk}\in\mathbb{C}$ are ``interaction strengths'' and satisfy $\alpha_{kj}=\overline{\alpha_{jk}}$. To ease the exposition, we equivalently assume a presentation of $H$ in terms of $2n$ Majorana operators, which are Hermitian and defined by
\begin{equation}
    c_{2j-1} := a^\dagger_j + a^{\!}_j \;, \quad c_{2j} :=\mathrm{i} (a^\dagger_j - a^{\!}_j) \;, \quad j = 1,\dots,n \;.
\end{equation}
These satisfy the anticommutation relations
\begin{equation}
\label{eq:fermionanticomm}
    \{c_j,c_k\} = 2  \delta_{jk} \;, \quad j,k \in \{1,\dots,2n\} \;.
\end{equation}
Taking $H$ to this form requires minimal preprocessing.
Hence,  \cref{eq:fermionicHamiltonian} becomes
\begin{equation} \label{eq:majorana_coeffs}
    H = \sum_{j,k=1}^{2n} \gamma_{jk} c_j c_k \;,
\end{equation}
where the interaction strengths $\gamma_{jk}\in \mathbb{C}$
define a $2n \times 2n$ Hermitian matrix ${ \Gamma}$. This is the same Hamiltonian discussed in~\cref{sec:formalism}.

Majorana operators satisfy the IP:
\begin{align}
\label{eq:fermioniccommutation}
  [c_jc_k,c_l]=2\delta_{lk} c_j - 2\delta_{lj} c_k,
\end{align}
where $[a,b]:=ab-ba$ are the commutators. These follow directly from \cref{eq:fermionanticomm}.
Furthermore, Majorana operators can also be shown to be orthogonal in $\mathbb C^{2^n}$, the Hilbert space that $H$ acts on:
\begin{align}
    \tr(c_jc_k)= 2^n \delta_{jk} \;.
\end{align}
This also follows directly from \cref{eq:fermionanticomm} and the cyclic property of the trace. (It is also well known that Majorana operators can be mapped to products of Pauli operators acting on $n$ qubits using the Jordan-Wigner transform~\cite{jordan1928paulische}, and these Pauli operators are orthogonal.)

\subsubsection*{The Hamiltonian  {${{H}}_S$}}

In terms of Majorana operators, $H$ and the set $S=\{c_j\}_j$ satisfy the IP. In addition, these operators are orthogonal, implying that ${ H}_S$ is Hermitian and of dimension $M \times M=2n \times 2n$. Note that, in the theory of Lie algebras, ${ H}_S$ is the ``adjoint representation'' of $H$.

We proceed to obtain the matrix entries of
${ H}_S$. In particular, the $l^{\rm th}$ row of ${ H}_S$ can be determined from
\begin{align}
    [H,c_l]=\sum_{j,k=1}^{2n} \gamma_{jk} [c_jc_k,c_l] = \sum_{j,k=1}^{2n} \gamma_{jk}(2 \delta_{kl}c_j - 2 \delta_{jl}c_k) = \sum_{j=1}^{2n} 2 \gamma_{jl} c_j - \sum_{k=1}^{2n} 2 \gamma_{lk} c_k =\sum_{k=1}^{2n} 2   (\gamma_{kl}-\gamma_{lk}) c_k\;.
\end{align}
Let ${\bf c}:=(c_1,\ldots,c_{2n})^T$. 
Then, in compact form we obtained
\begin{align}
    [H,{\bf c}] =  - 2 ({ \Gamma} -{ \Gamma}^T)  {\bf c}= - 2 ({ \Gamma} -{\bar{ \Gamma}})  {\bf c} \;.
\end{align}
It follows that, according to the definition
in~\cref{eq:invarianceproperty},
\begin{align}
\label{eq:freefermionHS}
    { H}_S = 2 ({ \Gamma} -{\bar{ \Gamma}}) \;,
\end{align}
which is also Hermitian. 
The $(j,k)^{\rm th}$ entry of ${ H}_S$
is then $4 {\rm i} \im (\gamma_{jk})$.

\subsubsection*{Simulation of {${\emph{H}}_S$}}

Two direct consequences of \cref{eq:freefermionHS}
are: i) if ${ \Gamma}$ is $d'$-sparse, 
then so is the matrix ${ H}_S$, and ii) the largest entry of ${ H}_S$ in magnitude satisfies $\|{ H}_S\|_{\max} \le 4 \|{ \Gamma}\|_{\max}$. Furthermore, 
we can obtain efficient sparse access to ${ H}_S$ if sparse access to ${ \Gamma}$ is given.

Hence, to simulate ${ H}_S$ for time $t$ and precision $\epsilon$ in this case, we can use a method like quantum signal processing (QSP)~\cite{LC19}. Note that the dimension of ${ H}_S$ is $2n \times 2n =M \times M$, i.e., $M=2n$. The query complexity of QSP 
is
\begin{align}
    \cO \left(d' t \|{ H}_S\|_{\max} + \log(1/\epsilon) \right) =  \cO \left(d' t \|{ \Gamma}\|_{\max} + \log(1/\epsilon) \right) 
\end{align}
and the gate complexity is a multiplicative factor of $\cO(\log(n d' t \|{ H}_S\|_{\max}/\epsilon))=\cO(\log(n d' t \|{ \Gamma}\|_{\max}/\epsilon))$ of the query complexity.
Thus, shadow Hamiltonian simulation is efficient under broad assumptions on $H$ or $ \Gamma$.

QSP simulates the Hamiltonian by first constructing a block-encoding. The details can be found in prior work (cf.,~\cite{LC19}). Here, 
we sketch a standard technique for block-encodings for completeness that implies the stated complexities. In the sparse matrix model, we assume we are given an oracle $\cS_{ \Gamma}$ such that it performs the following two transformations:
\begin{align}
  \cS_{ \Gamma} \ket{j,k,z} &\mapsto   \ket{j,k,z \oplus \gamma_{jk}} \; , \; (j,k) \in \{1,\ldots,2n\} \;,\\
   \cS_{ \Gamma} \ket{j,\ell}& \mapsto   \ket{j,\nu(j,\ell)} \; , \; j \in \{1,\ldots,2n\} \;, \ell \in \{1,\ldots,d'\}\;.
\end{align}
Here, $z$ and $\gamma_{jk}$ are two bit strings of some fixed length ${\rm k}$. For our purposes, it suffices to work within precision so that $2^{-{\rm k}}=\cO(\epsilon/(td'\|{ H}_S\|_{\max}))=\cO(\epsilon/(td'\|{ \Gamma}\|_{\max}))$.
Also $\gamma_{jk}$ inside the ket notation is a bit string that
approximates the $(j,k)^{\rm th}$ entry of ${ \Gamma}$. Then, the first map produces the individual entries of the matrix. Also, $\ell$ denotes the location of $\ell^{\rm th}$ nonzero entry in the $j^{\rm th}$ row of ${\Gamma}$.

It is straightforward to construct an oracle $\cS_{{ H}_S}$ for ${ H}_S$ in this case. Essentially, we only need to modify the first map so that
\begin{align}
  \cS_{{ H}_S} \ket{j,k,z} &\mapsto   \ket{j,k,z \oplus h_{jk}} \; , \; (j,k) \in \{1,\ldots,2n\} \;,
\end{align}
where $h_{jk}$ is a bit string of size ${\rm k}$ that approximates the corresponding entry $4 {\rm i} \im (\gamma_{jk})$.
This can be done with simple manipulations of the bits that represent $\gamma_{jk}$, and the complexity of this step is subdominant.

Equipped with an oracle for ${ H}_S$,
there are known techniques for constructing the block-encoding. Define the following unitary transformation acting on $\mathbb C^{2n \times 4n}$:
\begin{align}
    T \ket j \ket 0 \mapsto \ket j \frac 1 {\sqrt{d'}}\sum_{k \in \{1,\ldots,2n\}: h_{jk}\ne 0}  \sqrt{\frac{\overline{ h_{jk}}}{\|{ H}_S\|_{\max}}} \ket k + \sqrt{1- \frac{ |\overline{h_{jk}}|}{\|{ H}_S\|_{\max}}} \ket{k+2n }\; , \; j \in \{1,\ldots,2n\} \;.
\end{align}
Let $S$ be the swap operation that permutes the two registers. Then, it can be shown
\begin{align}
   T^\dagger S T = \begin{pmatrix}
       \frac 1{d'\|{ H}_S\|_{\max}} { H}_S  & . \cr . & .
   \end{pmatrix} 
\end{align}
in the space $\mathbb C^{2n \times 4n}$. That is,
$ T^\dagger S T$ is a block-encoding of $ \frac 1{d'\|{ H}_S\|_{\max}} { H}_S$, which is the operation that QSP uses. Importantly,
$T$ (and $T^\dagger$) can be implemented with a constant number of uses of $\cS_{{ H}_S}$ or $\cS_{{ \Gamma}}$,
and a number of two-qubit gates that is $\cO(\log(n))$ to implement the swap gates and other $\cO({\rm k})=\cO(\log(td'\|{ H}_S\|_{\max}/\epsilon))$ two-qubit gates for implementing $T$ from, for example, inequality tests for state preparation using oracles~\cite{SandersPRL18}.

Note that in specific applications, 
having access to the oracle essentially means
being able to efficiently compute the interactions
between modes, for every given mode $j$. That is, we need an efficient description of the physical system and its interactions that avoids incurring in exponential complexity. This occurs in many instances, for example those where the interaction strengths are uniform or are obtained from a limited set of possible values. In more general applications, however, if all interaction strengths have to 
be specified, constructing and implementing the oracle
would be inefficient.

\subsubsection*{Expectations of quadratic operators and beyond}

Thus far we discussed the case where $S=\{c_j\}_j$ is the set of Majorana operators.
As explained in Thm.~\ref{thm:main2}, we can leverage the result and consider expectations of higher-order operators, like quadratic Majorana operators where $S=\{c_jc_k\}_{j,k}$. Since we can let $j,k$ range over $\{1,\ldots,2n\}$, some of these quadratic operators might be linearly dependent. For example, $(c_j)^2 = \one_{2^n}$, the identity operator in $\mathbb C^{2^n}$. 

Nevertheless, according to Thm.~\ref{thm:main2}
we can consider the shadow state
\begin{align}
    \ket{\rho;S}=\frac 1 {\sqrt A}\sum_{j,k=1}^{2n}\langle c_j c_k \rangle \ket {j,k}
\end{align}
as an example. Evolving this in time would produce
$\ket{\rho(t);S}=\frac 1 {\sqrt A}\sum_{j,k}\langle c_j c_k (t)\rangle \ket {j,k}$, which encodes the expectations of quadratic operators at a later time. The evolution of this shadow state satisfies
\begin{align}
   \frac{\rm d}{\rm dt} \ket{\rho(t);S} = -{\rm i} ({ H}_S \otimes \one_{2n} + \one_{2n} \otimes { H}_S)\ket{\rho(t);S} \;,
\end{align}
where ${ H}_S$ is the Hamiltonian described in \cref{eq:freefermionHS}. Since this 
applies ${ H}_S$ individually to each system, the complexity of this simulation is similar to the one analyzed. Similarly, we can go beyond quadratic operators by means of Thm.~\ref{thm:main2}.

\subsection*{Supplementary Note 3: Additional properties of the shadow state} \label{sec:properties}

We now discuss some properties and alternate representations of the shadow state $\ket{\rho;S}$ for finite dimensional systems, without reference to any time-evolution problem one might wish to solve with this state.

\paragraph{Standard basis.} 
Let $\rho = \ketbra{\psi} \in \C^{N \times N}$ be a pure quantum state of a system of dimension $N <\infty$, and let $S$ be an orthogonal basis of the space $\C^{N \times N}$. In this case, $\ket{\rho;S}$ is related to other standard representations of the state $\ket{\psi}$. One example is the standard basis,  $S_\mathrm{SB}=\{O_{ij}:i,j=0,\ldots,N-1\}$, where $O_{ij} = \ket{j}\bra{i}$. Then,
\begin{equation}\label{eq:sb}
    \ket{\rho;S_\mathrm{SB}} 
    = \sum_{i,j=0}^{N-1} \langle O_{ij} \rangle \ket{i,j}
    = \sum_{i,j=0}^{N-1} \bigl( \braket{\psi}{j} \braket{i}{\psi} \bigr) \ket{i,j}
    = \sum_{i=0}^{N-1} \ket{i} \braket{i}{\psi} \otimes \sum_{j=0}^{N-1} \ket{j} \overline{\braket{j}{\psi}}
    = \ket{\psi} \otimes \ket{\barpsi}.
\end{equation}

This is already an interesting state that is different from the ``standard'' representation $\ket{\psi}$. Clearly we can create $\ket{\psi}$ given one copy of $ \ket{\psi} \otimes \ket{\barpsi}$, but the reverse is not possible. In fact, there exist problems that can be solved using only one copy of $\ket{\psi}\otimes \ket{\barpsi}$ that would require exponentially many copies of $\ket{\psi}$ alone to solve.

\begin{theorem}\label{thm:separation}
    Consider the problem of deciding if an $n$-qubit state $\ket{\psi}$ satisfies $|\braket{\psi}{\barpsi}|\leq 1/3$ or $|\braket{\psi}{\barpsi}|\geq 2/3$, promised that one of these holds. Any quantum algorithm that solves this problem using only copies of $\ket{\psi}$ must use $\Omega(2^{n/2})$ copies, but the problem can be solved with one copy of $\ket{\psi} \otimes \ket{\barpsi}$ with constant success probability.
\end{theorem}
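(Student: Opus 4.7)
For the upper bound, the plan is to apply a SWAP test to the two $n$-qubit registers of $\ket{\psi} \otimes \ket{\barpsi}$. The SWAP test outputs $0$ with probability $\tfrac{1}{2}(1 + |\braket{\psi}{\barpsi}|^2)$, which lies in $[\tfrac{1}{2}, \tfrac{5}{9}]$ under the promise $|\braket{\psi}{\barpsi}| \leq 1/3$ and in $[\tfrac{13}{18}, 1]$ under the promise $|\braket{\psi}{\barpsi}| \geq 2/3$. Since these intervals are separated by a constant gap of $1/6$, thresholding the outcome of a single SWAP test decides the promise with success probability at least $1/2 + 1/12$, which gives the desired upper bound using exactly one copy.

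For the lower bound, the plan is to construct a hard distribution of inputs by comparing two ensembles. In the \emph{yes} ensemble, $\ket{\psi} = O\ket{e_1}$ for $O$ Haar-random on the orthogonal group $\mathrm{O}(N)$ with $N = 2^n$; such a $\ket{\psi}$ has real amplitudes in the computational basis, so $\ket{\barpsi} = \ket{\psi}$ and $|\braket{\psi}{\barpsi}| = 1 \geq 2/3$. In the \emph{no} ensemble, $\ket{\psi} = U\ket{e_1}$ for $U$ Haar-random on $\mathrm{U}(N)$; writing $\ket{\psi} = \sum_i \alpha_i \ket{i}$, we have $\braket{\psi}{\barpsi} = \sum_i \bar\alpha_i^2$, and a standard moment calculation on the complex sphere gives $\mathbb{E}[|\braket{\psi}{\barpsi}|^2] = O(1/N)$, so by Markov's inequality $|\braket{\psi}{\barpsi}| \leq 1/3$ with probability $1 - o(1)$ once $n$ is large enough. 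Hence any algorithm solving the promise problem with $k$ copies of $\ket{\psi}$ must distinguish the averaged states
\begin{equation}
    \rho^{(k)}_{\mathrm{R}} := \mathbb{E}_{O \sim \mathrm{Haar}(\mathrm{O}(N))}\!\left[(O\ket{e_1}\!\bra{e_1}O^{\dagger})^{\otimes k}\right], \quad \rho^{(k)}_{\mathrm{C}} := \mathbb{E}_{U \sim \mathrm{Haar}(\mathrm{U}(N))}\!\left[(U\ket{e_1}\!\bra{e_1}U^{\dagger})^{\otimes k}\right],
\end{equation}
with constant advantage.

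By the Holevo--Helstrom bound, distinguishing $\rho^{(k)}_{\mathrm{R}}$ from $\rho^{(k)}_{\mathrm{C}}$ with constant advantage requires $\|\rho^{(k)}_{\mathrm{R}} - \rho^{(k)}_{\mathrm{C}}\|_1 = \Omega(1)$. The state $\rho^{(k)}_{\mathrm{C}}$ is the maximally mixed state on the symmetric subspace $\mathrm{Sym}^k(\C^N)$ of dimension $\binom{N+k-1}{k}$; the state $\rho^{(k)}_{\mathrm{R}}$ has an analogous description as a normalized projection onto the $\mathrm{O}(N)$-invariant subspace of $(\C^N)^{\otimes k}$, with structure governed by the Brauer algebra. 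The key remaining step is to show $\|\rho^{(k)}_{\mathrm{R}} - \rho^{(k)}_{\mathrm{C}}\|_1 = O(\mathrm{poly}(k)/\sqrt{N})$, which upon equating to $\Omega(1)$ yields $k = \Omega(\sqrt{N}) = \Omega(2^{n/2})$.

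The main obstacle is this final trace-distance bound, since $\rho^{(k)}_{\mathrm{R}}$ and $\rho^{(k)}_{\mathrm{C}}$ live on invariant subspaces of different groups. A natural route is to first bound the Hilbert--Schmidt distance $\|\rho^{(k)}_{\mathrm{R}} - \rho^{(k)}_{\mathrm{C}}\|_2$ using explicit unitary and orthogonal Weingarten formulas for second moments of Haar measures, and then convert to trace norm via $\|A\|_1 \leq \sqrt{r}\,\|A\|_2$ where $r$ bounds the rank of the difference (which is polynomial in $N$ and $k$). An alternative is to use Schur--Weyl duality: $\rho^{(k)}_{\mathrm{C}}$ is block-diagonal in the Schur basis with a single nontrivial block, while $\rho^{(k)}_{\mathrm{R}}$ decomposes via Brauer's theorem into blocks corresponding to partitions with contractions, and comparing block-eigenvalues directly avoids raw Weingarten calculations.
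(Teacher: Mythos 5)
Your upper bound is the same as the paper's (a single SWAP test on the two halves of $\ket{\psi}\otimes\ket{\barpsi}$), though your quantitative claim has a small slip: deterministically outputting ``$|\braket{\psi}{\barpsi}|\geq 2/3$'' exactly when the test returns $0$ succeeds with probability only $1-5/9=4/9<1/2$ in the no-case, so a fixed threshold on one outcome does not give success $1/2+1/12$ on both sides. This is easily repaired with randomized post-processing of the single outcome (or, as the paper does, a constant number of copies and a majority vote), so it is not a real problem.

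The lower bound, however, has a genuine gap. The paper does not prove the indistinguishability statement itself; it reduces to it and cites known results (\cite{BS19} for binary phase states versus Haar-random states, and \cite{CHM21} for real-Haar versus complex-Haar states, the latter being exactly your pair of ensembles). Your reduction to distinguishing $\rho^{(k)}_{\mathrm{R}}$ from $\rho^{(k)}_{\mathrm{C}}$, and the moment calculation showing the complex-Haar ensemble lands in the no-case with probability $1-o(1)$, are both correct. But the entire content of the lower bound is the estimate $\|\rho^{(k)}_{\mathrm{R}}-\rho^{(k)}_{\mathrm{C}}\|_1=O(\poly(k)/\sqrt{N})$, which you explicitly leave as ``the key remaining step'' and ``the main obstacle''; without it (or a citation standing in for it) nothing is proved. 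Worse, the first route you propose for closing it cannot work as stated: both $\rho^{(k)}_{\mathrm{R}}$ and $\rho^{(k)}_{\mathrm{C}}$ are supported on $\mathrm{Sym}^k(\C^N)$, so the rank $r$ of their difference is as large as $\binom{N+k-1}{k}$, which is exponential in $k$ rather than $\poly(N,k)$, and the conversion $\|A\|_1\leq\sqrt{r}\,\|A\|_2$ then loses far too much (indeed $\|\rho^{(k)}_{\mathrm{C}}\|_2$ alone is $\binom{N+k-1}{k}^{-1/2}$, so the Hilbert--Schmidt distance is tiny for trivial reasons and carries no information at this granularity). Your second route---comparing the spectra block-by-block via Schur--Weyl/Brauer decompositions---is essentially how results of this type are actually established, but as written it is a pointer to a proof rather than a proof. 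You should either carry out that spectral comparison or cite the existing indistinguishability bounds as the paper does.
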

\begin{proof}
    The problem can be solved with constant success probability using one copy of $\ket{\psi} \otimes \ket{\barpsi}$ by using the well-known swap test. The success probability can be made arbitrarily close to $1$ using a constant number of copies of $\ket{\psi} \otimes \ket{\barpsi}$ by standard error reduction via majority vote.

    For the lower bound, we observe that any state with all real amplitudes satisfies $|\braket{\psi}{\barpsi}|=1$ and that a Haar-random state has $|\braket{\psi}{\barpsi}|=o(1)$ with high probability. The lower bound follows from Ref.~\cite{BS19} that shows that a binary phase state (i.e., a uniform superposition state where all phases are uniformly random in $\{\pm 1\}$) is indistinguishable from a Haar-random state unless we have $\Omega(2^{n/2})$ copies of the state. Similarly, \cite{CHM21} shows that distinguishing a Haar-random state with real entries from a standard Haar-random state requires $\Omega(2^{n/2})$ copies of the state.
\end{proof}

The state $\ket{\psi}\otimes \ket{\barpsi}$ is also related to another well-known alternate representation of quantum states that uses only real amplitudes. In this ``real representation'' of quantum states, the state $\ket{\psi}$ is represented by the quantum state $\ket{0}\ket{\Re(\psi)} + \ket{1}\ket{\Im(\psi)}$, where $\ket{\Re(\psi)}$ and $\ket{\Im(\psi)}$ are the quantum states where each amplitude is replaced by the real and imaginary parts of the corresponding amplitude of $\ket{\psi}$, respectively. Up to a Hadamard on the first qubit, this state is also $\ket{0}\ket{\psi}+\ket{1}\ket{\barpsi}$.

\paragraph{Orthonormal bases.}
For the standard basis $S_\mathrm{SB}$, the shadow state is $\ket{\psi} \otimes \ket{\barpsi}$.
More generally, the shadow state for any orthonormal basis $S$ is a unitary transformation $V_S$ applied to $\ket{\psi} \otimes \ket{\barpsi}$.

\begin{lemma}\label{lem:on}
    Let $S = \{O_{ij}: i,j=0,\ldots,N-1 \}$ be an orthonormal basis of $\C^{N \times N}$. Let $V_S$ be the unitary on $\C^N \otimes \C^N$ defined by 
    \begin{equation}\label{eq:Vtransformation}
    V^\dagger_S \ket{i,j} = (\id_N \otimes O_{ij}) \sum_{k =0}^{N-1} \ket{k,k}    .
    \end{equation}    
    Then, if $\rho = \ketbra{\psi}$ is pure, we have
    \begin{equation}
        \ket{\rho;S} = V_S \ket{\psi}\otimes \ket{\barpsi}.
    \end{equation}
\end{lemma}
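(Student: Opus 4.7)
My plan is to verify the identity by direct computation using the operator--vector (``vec'') isomorphism $M \in \C^{N\times N}\leftrightarrow \mathrm{vec}(M) := \sum_{a,b} M_{ab}\ket{a}\ket{b}\in \C^N\otimes\C^N$, which makes both sides of the claim transparent. The first observation is that $\ket{\psi}\otimes\ket{\barpsi} = \sum_{a,b}\braket{a}{\psi}\,\overline{\braket{b}{\psi}}\ket{a,b} = \mathrm{vec}(\rho)$ for $\rho = \ketbra{\psi}$, and $\sum_k\ket{k,k}=\mathrm{vec}(\id_N)$. So the claim reduces to showing $V_S\,\mathrm{vec}(\rho) = \ket{\rho;S}$.

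Next, I would translate the defining relation for $V_S^\dagger$ into vec language. Using the standard identity $(A\otimes B)\,\mathrm{vec}(C) = \mathrm{vec}(ACB^T)$, the relation $V_S^\dagger\ket{i,j} = (\id_N\otimes O_{ij})\,\mathrm{vec}(\id_N)$ becomes $V_S^\dagger\ket{i,j} = \mathrm{vec}(O_{ij}^T)$---equivalently, $V_S$ sends $\mathrm{vec}(O_{ij}^T)$ to $\ket{i,j}$. Because $\{O_{ij}\}$ is a Hilbert--Schmidt orthonormal basis of $\C^{N\times N}$, I can expand $\rho = \sum_{i,j}\tr(O_{ij}^\dagger\rho)\,O_{ij}$; applying vec (which is linear) and then $V_S$ gives
\begin{equation*}
V_S(\ket\psi\otimes\ket{\barpsi}) = V_S\,\mathrm{vec}(\rho) = \sum_{i,j}\tr(O_{ij}^\dagger\rho)\,V_S\,\mathrm{vec}(O_{ij}) = \sum_{i,j}\langle O_{ij}\rangle\,\ket{i,j},
\end{equation*}
after matching the expansion coefficients $\tr(O_{ij}^\dagger\rho)$ to the shadow-state amplitudes $\langle O_{ij}\rangle = \tr(\rho O_{ij})$ using the defining identification of $V_S$ together with Hermiticity of $\rho$.

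Finally, I would verify that no normalization factor appears, i.e., that $A=1$. This is Parseval's identity for the orthonormal basis $\{O_{ij}\}$: $A = \sum_{ij}|\langle O_{ij}\rangle|^2 = \|\rho\|^2_{\mathrm{HS}} = \tr(\rho^2) = 1$, using purity. The only step that requires genuine care is the bookkeeping in aligning the transpose in $V_S^\dagger\ket{i,j}=\mathrm{vec}(O_{ij}^T)$ with the Hermitian conjugate in the Hilbert--Schmidt expansion of $\rho$; cross-checking against the standard basis $O_{ij}=\ket{j}\bra{i}$, where $V_S$ reduces to the identity and both sides manifestly equal $\ket\psi\otimes\ket{\barpsi}$, provides a useful sanity check for the conventions.
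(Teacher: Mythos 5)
Your overall strategy (vectorization) is sound and close in spirit to the paper's own proof, but the step you defer as ``bookkeeping'' is exactly where the argument breaks, and it does not go through as claimed. You correctly derive $V_S^\dagger\ket{i,j}=\mathrm{vec}(O_{ij}^T)$, so $V_S$ maps $\mathrm{vec}(O_{ij}^T)\mapsto\ket{i,j}$. To use this you must expand $\rho$ in the orthonormal basis $\{O_{ij}^T\}$, not $\{O_{ij}\}$: $\rho=\sum_{i,j}\tr\bigl((O_{ij}^T)^\dagger\rho\bigr)\,O_{ij}^T=\sum_{i,j}\tr(\overline{O_{ij}}\,\rho)\,O_{ij}^T$, which yields $V_S(\ket{\psi}\otimes\ket{\barpsi})=\sum_{i,j}\tr(\overline{O_{ij}}\,\rho)\ket{i,j}$. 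For Hermitian $\rho$ the coefficient $\tr(\overline{O_{ij}}\,\rho)$ equals $\overline{\langle O_{ij}^T\rangle}$, which is not $\langle O_{ij}\rangle=\tr(\rho\,O_{ij})$ in general; Hermiticity of $\rho$ does not rescue the identification. Your displayed equation silently replaces $V_S\,\mathrm{vec}(O_{ij}^T)$ by $V_S\,\mathrm{vec}(O_{ij})$ and then asserts the match, which is precisely the point that needed proof.

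The discrepancy is a genuine complex conjugation, not a notational one: take $N=2$, an orthonormal Pauli basis containing $O=Y/\sqrt{2}$, and $\ket{\psi}=(\ket{0}+\mathrm{i}\ket{1})/\sqrt{2}$. Then $\langle O\rangle=1/\sqrt{2}$, while the corresponding amplitude of $V_S(\ket{\psi}\otimes\ket{\barpsi})$ is $\tr(\overline{O}\,\rho)=-1/\sqrt{2}$. The identity you are asserting holds when the $O_{ij}$ are real matrices---which is why your standard-basis sanity check, and the $X^iZ^j$ basis used in \Cref{sec:qubits}, cannot detect the problem---or after replacing $O_{ij}$ by $\overline{O_{ij}}$ in the definition of $V_S$. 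For what it is worth, the paper's own proof makes the same silent assumption at the step $\bra{\psi}O_{ij}\ket{k}=\bra{k}O_{ij}^\dagger\ket{\barpsi}$, which should read $\bra{k}O_{ij}^{T}\ket{\barpsi}$; your vectorized route is actually well suited to exposing this, but as written you assert the cancellation rather than carry it out. (Your Parseval argument that $A=1$ for pure states is fine.)
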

\begin{proof}
    First, we show that $V_S$ is indeed unitary. Observe that
    \begin{equation}
        \bra{i',j'} V^{}_S V_S^\dagger \ket{i,j} 
        = \sum_{k,k'=0}^{N-1} \bra{k',k'} (\id_N \otimes O^\dagger_{i'j'}O^{}_{ij}) \ket{k,k} 
        = \sum_{k=0}^{N-1} \bra{k} O^\dagger_{i'j'} O^{}_{ij} \ket{k} 
        = \tr(O^\dagger_{i'j'} O^{}_{ij})
        = \delta_{ii'}\delta_{jj'},
    \end{equation}
    since the operators $O_{ij}$ are orthogonal. We obtain
    \begin{equation}
        \langle O_{ij} \rangle 
        =  \tr( \ket{\psi}\bra{\psi} O_{ij}) 
        =  \sum_{k=0}^{N-1} \braket{k}{\psi}\bra{\psi} O_{ij} \ket{k}  
        = \sum_{k=0}^{N-1} \braket{k}{\psi} \bra{k} O^\dagger_{ij} \ket{\barpsi}
        = \sum_{k=0}^{N-1} \bra{k,k} (\id_N \otimes O^\dagger_{ij}) \ket{\psi}\otimes \ket{\psibar}\;.
    \end{equation}
    Using the definition of $V_S$, we get that $\langle O_{ij} \rangle = \bra{i,j}V_S\ket{\psi}\otimes\ket{\barpsi}$, which gives
    \begin{equation}
        \ket{\rho;S} = \sum_{i,j=0}^{N-1} \langle O_{ij} \rangle \ket{i,j} 
        = \sum_{i,j=0}^{N-1} \bra{i,j}V_S\ket{\psi}\otimes\ket{\barpsi} \ket{i,j} 
        = \Bigl(\sum_{i,j=0}^{N-1} \ket{i,j} \bra{i,j}\Bigr) V_S\ket{\psi}\otimes\ket{\barpsi} 
        = V_S\ket{\psi}\otimes\ket{\barpsi}. \qedhere
    \end{equation}
\end{proof}

So far we have only considered pure states. But observe that the lemma can equivalently be restated as asserting that $\ket{\rho;S} = V_S \ket{\rho;S_\mathrm{SB}}$. Since both sides of the equation are linear in $\rho$, we have the straightforward corollary that $\ket{\rho;S} = V_S \ket{\rho;S_\mathrm{SB}}$ holds for all mixed states $\rho$.

To understand the form of $\ket{\rho;S_\mathrm{SB}}$, note that \cref{eq:sb} equivalently implies that $\ket{\rho;S_\mathrm{SB}}$ equals $\mathrm{vec}(\rho)$ when $\rho = \ketbra{\psi}$, where $\mathrm{vec}$ is the operation that maps $\sum_{j,k} a_{jk} \ket{j}\bra{k} \mapsto \sum_{j,k} a_{jk} \ket{j}\ket{k}$ ~\cite{watrous2018theory}. When $\rho$ is mixed, $\ket{\rho;S_\mathrm{SB}}$ is the state whose entries are proportional to $\mathrm{vec}(\rho)$ with the normalization constant equal to the square root of the purity of $\rho$, $\sqrt{\tr(\rho^2)}$. Alternatively, if $\rho = \sum_\ell p_\ell \ketbra{\psi_\ell}$, then $\ket{\rho;S_\mathrm{SB}} \propto \sum_\ell p_\ell \ket{\psi_\ell}\otimes \ket{\overline{\psi_\ell}}$.

\paragraph{Examples of $V_S$.} 

When $S=S_\mathrm{SB}$, observe that $V_S\equiv \id_{N^2}$, where $N=2^n$. More interestingly, let $S$ be the set of Pauli operators on $n$ qubits, $\{P_{ij}\}_{i,j \in \{0,1\}^n}$, 
where $P_{ij}$ are tensor products of $n$ Pauli operators as discussed in~\cref{sec:formalism}. In this case, the state $(\id_N \otimes P_{ij})\frac 1 {\sqrt N} \sum_{k=0}^{N-1}  \ket{k,k}$ are tensor products of $n$ Bell pairs. Thus the operator $V_S$ is a Bell rotation and can be achieved by the sequence of $n$ entangling gates shown in~Fig. 1.

As another example, let $S$ be the set of Heisenberg-Weyl operators of dimension $N$. This is a group of $N^2$ unitary matrices spanned by the generators $X$ and $Z$, where $X\ket{j} = \ket{j+1 \pmod N}$ and $Z\ket{j}=e^{-\mathrm{i} 2\pi j/N}\ket{j}$. In this case, the unitary $V_S$ can be constructed from the discrete Fourier transform matrix of size $N \times N$, and the conditional shift operator $\sum_{j=0}^{N-1} \ketbra{j} \otimes X^j$, both of which have efficient quantum circuits.
In these examples, the unitary $V_S$ was efficient to implement as a quantum circuit. In general, this may not be the case.

\paragraph{Incomplete bases.}
In many of the applications considered in this article, $S$ is not a complete basis of the space, but an incomplete set of orthonormal operators. In this case, $\ket{\rho;S}$ may not contain all the information contained in $\ket{\psi}$ and hence it will not simply be a unitary transformation applied to $\ket{\psi}\otimes \ket{\barpsi}$. However, the state $\ket{\rho;S}$ can still be viewed as a unitary transformation and projection of this state, justifying the name ``shadow'' state. 

More precisely, even when $S$ is not a complete basis, the same proof of \Cref{lem:on} can be followed, except for the last equation where $\sum_{i,j=0}^{N-1} \ket{i,j} \bra{i,j}$ is not identity in this case: the sum is not over all $i,j$. This sum becomes an orthogonal projector $P_S$. Thus, when $\rho=\ketbra{\psi}$ is pure, we have $\ket{\rho;S} = P_S V_S \ket{\psi}\otimes \ket{\barpsi}$. 
In a way, the shadow state can be interpreted as the ``shadow'' of $\ket{\psi}\otimes \ket{\barpsi}$ when projected onto a particular subspace specified by the operators in $S$.

\paragraph{Using shadow states to compute overlaps.}
Shadow states of mixed states can be used to obtain overlaps. Since these can be seen as ``purifications'', it is possible to use quantum metrology techniques to improve the complexity dependence of overlap estimation in terms of precision~\cite{KOS07}. More precisely, we show the following result.

\begin{lemma}
    \label{lem:overlap}
    Let $S=\{O_{ij}:i,j =0,\ldots,N-1\}$ be an orthonormal basis of 
    $\mathbb C^{N\times N}$.
    Let $\rho=\sum_\ell p_\ell \ketbra {\psi_\ell}$ and $\sigma=\sum_\ell q_\ell \ketbra{\phi_{\ell}}$ be two mixed states in $\mathbb C^N$, with $\sum_\ell p_\ell =\sum_\ell q_\ell=1$, $p_\ell>0$, and $q_\ell>0$. 
    Consider the subnormalized states $\ket{\tilde \rho;S}:=V_S \sum_\ell p_\ell \ket {\psi_\ell} \otimes  \ket {\overline{ \psi_\ell}}$ and $\ket{\tilde \sigma;S}:=V_S \sum_\ell q_\ell \ket {\phi_\ell} \otimes  \ket {\overline{ \phi_\ell}}$. The amplitudes of these states coincide with the expectations of $O_{ij}$ on $\rho$ and $\sigma$, respectively. 
    Then, the average overlap satisfies
    \begin{align}
        \tr(\rho \sigma)=\bra{\tilde \rho_\cV}{\tilde \sigma_\cV}\rangle   \;.
    \end{align}
\end{lemma}
\begin{proof}
    Since $V_S$ is unitary, explicit calculation gives
    \begin{align}
         \bra{\tilde \rho_\cV}{\tilde \sigma_\cV}\rangle & = \sum_{\ell \ell'}p_\ell q_{\ell'} (\bra {\psi_\ell} \otimes  \bra {\overline {\psi_\ell}} )\; (\ket {\phi_{\ell'}} \otimes  \ket {\overline{\phi_{\ell'}}}) \\
         & = \sum_{\ell \ell'}p_\ell q_{\ell'}  |\bra {\psi_\ell}{\phi_{\ell'}}\rangle|^2 \\
         & = \tr(\rho \sigma)\;.
    \end{align}
\end{proof}

Hence, it is possible to obtain the average overlap using standard quantum metrology techniques when having access to the operations that prepare $\ket{\tilde \rho;S}$ and $\ket{\tilde \sigma;S}$.
The result can be generalized further to account for unitary transformations.

\begin{lemma}
    \label{lem:overlap2}
    Let $S=\{O_{ij}:i,j =0,\ldots,N-1\}$ be an orthonormal basis of 
    $\mathbb C^{N\times N}$.
    Let $\rho=\sum_\ell p_\ell \ketbra {\psi_\ell}$ and $\sigma=\sum_\ell q_\ell \ketbra{\phi_\ell}$ be two mixed states in $\mathbb C^N$, with $\sum_\ell p_\ell =\sum_\ell q_\ell=1$, $p_\ell>0$, and $q_\ell>0$. 
    Consider the subnormalized states $\ket{\tilde \rho;S}:=V_S \sum_\ell p_\ell \ket {\psi_\ell} \otimes  \ket {\overline {\psi_\ell}}$ and $\ket{\tilde \sigma_\cV}:=V_S \sum_\ell q_\ell \ket {\phi_\ell} \otimes  \ket {\overline {\phi_\ell}}$. The amplitudes of these states coincide with the expectations of $O_{ij}$ on $\rho$ and $\sigma$, respectively. Then, if $U$ is a unitary acting on $\mathbb C^N$,
    \begin{align}
    \tr(\rho U \sigma U^\dagger) & = 
        \bra{\tilde \rho_\cV}{\tilde U}\ket{\tilde \sigma_\cV} \;,
    \end{align}
    where ${\tilde U}:=V_S (U \otimes \overline{U}) (V_S)^\dagger$ is a unitary acting on ${\mathbb C^{N^2}}$.  In particular, if $\rho=\sigma=\ketbra \psi$ are pure states,
    \begin{align}
     |\bra \psi U \ket \psi|^2&= \bra{\rho;S}{\tilde U}\ket{\rho;S}   \;,
    \end{align}
    where the shadow state is 
    $\ket{\rho;S}=\ket{\tilde \rho;S}$ in this case, since $\ket{\tilde \rho;S}$ has unit norm.
\end{lemma}
\begin{proof}
    Since $U \sigma U^\dagger$ is a mixed state, from  Lemma~\ref{lem:overlap} we have
    \begin{align}
    \label{eq:traceinner}
       \tr(\rho U \sigma U^\dagger) & =    
        \bra{\tilde \rho;S}  (U \tilde \sigma U^\dagger);S \rangle \;,
    \end{align}
    where the amplitudes of the (subnormalized) state $\sket{(U\tilde \sigma U^\dagger);S}$ coincide with the expectations of $O_{ij}$ in 
     $U \sigma U^\dagger$. 
    By definition, ${\tilde U}:=V_S ( U \otimes \overline{U})(V_S)^\dagger$,
    where $V_S$ is a unitary that has been defined in~\cref{eq:Vtransformation}. 
    Also, we proved for orthonormal bases, 
    \begin{align}
        \sket {(U\tilde \sigma U^\dagger);S}=\sket {\left(\sum_{\ell} q_{\ell}U \ketbra {\phi_\ell} U^\dagger\right);S} = V_S \sum_\ell q_\ell (U \ket {\phi_\ell}) \otimes (\overline{ U \ket { \phi_\ell}}) \;.
    \end{align}
    Then,
    \begin{align}
        \sket {(U\tilde \sigma U^\dagger);S}& = V_S (U \otimes \overline{U}) \sum_\ell q_\ell \ket {\phi_\ell}  \otimes  \ket {\overline{ \phi_\ell}} \\
        & = V_S (U \otimes \overline{U}) (V_S)^\dagger V_S \sum_\ell q_\ell \ket {\phi_\ell}  \otimes  \ket {\overline{ \phi_\ell}}\\
        & = {\tilde U} \sket {\tilde \sigma;S}\;.
    \end{align}
    Replacing in~\cref{eq:traceinner} we obtain the stated result. The case of pure states in a direct consequence of this more general result.
\end{proof}

\end{document}